\newtheorem{lemma}{Lemma}
\newtheorem{theorem}{Theorem}
\pgfplotsset{compat=newest}
\pgfplotsset{
  cycle list={TUMBlue\\TUMAccentOrange\\TUMAccentGreen\\TUMSecondaryBlue2\\TUMDarkGray\\},
}
\bfseries\color{TUMBlue},
\newcommand{\remove}[1]{}
\newcommand{\AStrict}{\textsc{Strict}\xspace}
\newcommand{\AFlex}{\textsc{Flex}\xspace}
\newcommand{\erf}{ERF\xspace}
\newcommand{\nrf}{NRF\xspace}
\newcommand{\aprspstrict}{\textsc{APRSP\_Lorenz}\xspace}
\newcommand{\aprspflex}{\textsc{APRSP\_Goel}\xspace}
\title{It's Good to Relax: Fast Profit Approximation for 
Virtual Networks with Latency Constraints}
\newcommand{\linebreakand}{%
  \end{@IEEEauthorhalign}
  \hfill\mbox{}\par
  \mbox{}\hfill\begin{@IEEEauthorhalign}
}
\author{\IEEEauthorblockN{Robin M\"unk\IEEEauthorrefmark{1}, Matthias Rost\IEEEauthorrefmark{2}, Stefan Schmid\IEEEauthorrefmark{3}, Harald R\"acke\IEEEauthorrefmark{1}}
	\IEEEauthorblockA{\IEEEauthorrefmark{1}Technical University of Munich}
	\IEEEauthorblockA{\IEEEauthorrefmark{2}SAP SE \& Technische Universit\"at Berlin}
	\IEEEauthorblockA{\IEEEauthorrefmark{3}University of Vienna}}
\begin{document}

\maketitle

\begin{abstract}
This paper proposes a new approximation algorithm for the offline Virtual Network Embedding Problem (VNEP) with latency constraints. Given is a set of virtual networks with computational demands on nodes and bandwidth demands together with latency bounds on the edges. The VNEP's task is to feasibly embed a subset of virtual networks on a shared physical infrastructure, e.g., a data center, while maximizing the attained profit.
In contrast to existing works, our approximation algorithm \textnormal{\AFlex} allows for (slight) violations of the latency constraints in order to greatly lower the runtime. To obtain this result, we use a reduction to the Restricted Shortest Path Problem (RSP) and leverage a classic result by Goel et al.
We complement our formal analysis with an extensive simulation study demonstrating the computational benefits of our approach
empirically.
Notably, our results generalize to any other 
additive edge metric besides latency,
including loss probability. 
\end{abstract}

\section{Introduction}

The Virtual Network Embedding Problem (VNEP) is a fundamental resource allocation
problem in networks and has received significant interest in the network algorithms community
over the last decade. 
The VNEP is motivated by the resource allocation flexibilities available in 
virtualized environments, such as the Cloud, where node and network resources
can be shared and provisioned on demand.
A virtual network provides the illusion of a dedicated network to the user,
although it is realized over a shared infrastructure. To this end,
a virtual network provides resource guarantees both on the nodes (e.g., CPU)
as well as on the edges (e.g., bandwidth).
 
The optimization problem underlying the VNEP is the following. 
We are given a set of request graphs (the virtual networks, sometimes also called ``guest graphs'') 
and a single substrate network (the physical infrastructure, also called the ``host graph'').
For every request graph the task is to either find a feasible embedding that maps each request node to a substrate node 
and every request edge to a path in the substrate graph, or to reject the request. The cumulative resource consumption of the embeddings may then not violate the substrate capacities on both nodes and edges. In this paper we consider unsplittable paths with latency constraints: if a request is admitted, its edges are embedded as simple paths. Every admitted and feasibly embedded request yields a given profit and the goal is to maximize the total profit.

The VNEP is hard so solve in many variants. Even when neglecting the cumulative feasibility constraints, which is known as the Valid Mapping Problem (VMP)~\cite{rost2019parametrized}, the problem remains $\mathcal{NP}$-hard~\cite{ton20hard}. Importantly, solving the VMP is an essential building block for approximation algorithms for the VNEP~\cite{rost2019parametrized, rostTON20randround}.

In this paper we consider latency as an example of secondary edge demand besides bandwidth. 
Latency has become a critical metric for many applications, e.g., in the context of industrial or tactile networks.
In addition, the applications expected to emerge around 5G require very
low latency, deterministic packet delivery and high availability~\cite{jiang2018low,li20175g}.
Our results can however be applied to any additive edge metric, like hop count and, interestingly, even packet loss probability, which is by itself multiplicative but becomes additive when taking its logarithm. 

\subsection{Contributions}

This paper presents a novel, fast and practical approximation algorithm \AFlex
for the VNEP with latency constraints (or any other additive constraint on edge metrics).
\AFlex provides both analytical approximation guarantees and performs well in practice, as demonstrated in our computational evaluation. 
\AFlex is based on the insight that a slight relaxation of the latency
guarantees can result in significantly faster and hence more practical
solutions.
The latency violations can be made arbitrarily small, by trading off 
for a longer runtime.
To achieve this, 
\AFlex builds upon the dynamic programming and randomized
rounding framework by Rost et al.~\cite{rost2019parametrized},
which solves an all-pairs Restricted Shortest Path Problem (RSP) as a subroutine.
In order to solve the RSP, 
we employ a classic result by Goel et al.~\cite{goel2001efficient}
which allows, in one execution, to calculate the routes for all
destination nodes at once for a given source node. 

Compared to the state-of-the-art algorithm, which we refer to by \AStrict
as it provides strict latency guarantees (relying on an approximation scheme for the RSP
by Lorenz and Raz~\cite{lorenz2001simple}),
\AFlex is orders of magnitudes faster, sometimes 
reducing the runtime from over nine hours to below three minutes. 
At the same time, the profit approximation and average latency achieved
by \AFlex  is similar to the one obtained by \AStrict. 
We believe that this makes \AFlex a much more practical solution.

To ensure reproducibility, we have made the source code
of our algorithms and our experiments publicly available
\footnote{at \href{http://www.github.com/vnep-approx-latency}{github.com/vnep-approx-latency}}.

\subsection{Related Work}

Our paper builds upon the algorithmic framework by Rost et al.~\cite{rost2019parametrized} 
which uses a dynamic program to jointly optimize the mapping of the request's nodes and
edges, and relies on randomized rounding.
This allows us to reduce part of our task to solving the Restricted Shortest Paths (RSP) problem,
a special case of the Multi-Constrained Optimal Path Problem (MCOP) 
where the goal is to find a source-target path minimizing the costs while respecting 
$K \geq 1$ additive quality-of-service parameters;
the RSP is a MCOP with $K = 1$. 
In the following, we first review the literature on the RSP, and then discuss the VNEP.

\subsubsection{The RSP}
\label{sec:related-work:rsp}
The RSP was shown to be NP-hard in \cite{gary1979computers}.
The first fully polynomial time-approximation scheme (FPTAS) for the  RSP for general graphs was developed by Hassin~\cite{hassin1992approximation} 
in 1992. 
Hassin's algorithm  
gives a $(1+\epsilon)$-approximate cost-minimal path, 
guarantees to respect the given latency bound $T$,
and runs in time $\mathcal{O}(mn(n/\epsilon) \, \log(n/\epsilon))$, where $n$ is the number of nodes and $m$ the number of edges.
This result was then improved in 2001 by Lorenz and Raz~\cite{lorenz2001simple},
to $\mathcal{O}(mn(\log{\log{n}} + 1/\epsilon))$. The authors first compute an upper and lower bound for the costs of the optimal solution in $\mathcal{O}(mn \, \log \, \log \, n)$ time, which they then refine to a $(1+\epsilon)$-approximate solution in $\mathcal{O}(mn / \epsilon)$ time.
Both approaches, the one by Hassin and the one by Lorenz and Raz,
have in common that they re-scale and then discretize the edge-costs based on 
bounds of the optimal solution.
The current fastest FPTAS for the RSP in general graphs runs in time $\mathcal{O}(mn(1/\epsilon + \log{\log{\log{n}}}))$. The first such algorithm was proposed by Xue et al.~\cite{xue2008polynomial} in 2008. 
Under additional assumptions, there exist faster approximation schemes for the RSP. For example the author of~\cite{bernstein2012nearlinear} proposes an algorithm that has near-linear time complexity, but it is non-deterministic and only works in undirected graphs. If the given graph is planar and acyclic (i.e., a DAG) or only has integer edge-costs, the algorithms proposed in~\cite{holzmuller2017improved} runs in $\mathcal{O}(mn/\epsilon)$ time. 

The above algorithms all approximate the cost while safeguarding that the additive edge constraint, e.g., the latency bound, is met. In contrast, Goel et al. proposed in~\cite{goel2001efficient} a different scheme which relaxes the edge constraints while always achieving the optimal cost (or better):
\begin{theorem}[Goel et al.~\cite{goel2001efficient}]\label{theorem:goel}
	For a given graph $G=(V,E)$ and source node $s \in V$ there exists an algorithm that computes paths $P(t)$ from $s$ to each target node $t \in G$ in time \mbox{$\mathcal{O} \Big( (m + n \, \log{n} ) \, D / \epsilon  \Big)$} with $D \le n$  such that for each path $P(t)$ the latency bound is violated by at most a factor of $1+\varepsilon$ while the cost of $P(t)$ lies beneath the optimal cost of any latency bound $s-t$ path.
\end{theorem}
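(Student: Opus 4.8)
The plan is to follow the \emph{dual} discretization idea that distinguishes this result from the Hassin and Lorenz--Raz schemes: instead of rescaling and rounding the edge \emph{costs} (which forces one to first estimate bounds on the optimal cost), I would rescale and round the edge \emph{latencies}, so that a single Dijkstra-style sweep suffices and the bound $T$ is only violated in a controlled way. Concretely, let $D \le n$ be an upper bound on the number of hops of a simple path (so $D = n-1$ always works, and a smaller $D$ may be used if path lengths are known to be short), and set the quantization width $\delta = \epsilon T / D$. I would replace each edge latency $d(e)$ by the integer number of quanta $w(e) = \lceil d(e)/\delta\rceil$, which rounds the latency \emph{up} by at most $\delta$ per edge.

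The approximation guarantee then falls out of a single observation. Let $P^\ast$ be a cost-minimal $s$--$t$ path with true latency at most $T$; since it uses at most $D$ edges, its rounded latency satisfies
\[
\sum_{e \in P^\ast} w(e)\,\delta \;\le\; \sum_{e \in P^\ast}\big(d(e)+\delta\big) \;\le\; T + D\delta \;=\; (1+\epsilon)\,T .
\]
Hence $P^\ast$ survives as a candidate once we admit paths of rounded latency up to $(1+\epsilon)T$, i.e.\ up to $L := (1+\epsilon)D/\epsilon = \mathcal{O}(D/\epsilon)$ quanta. If the algorithm returns, for each $t$, a minimum-cost path $P(t)$ among those with rounded latency at most $L$ quanta, then on the one hand $c(P(t)) \le c(P^\ast)$, the optimum cost (rounding up only shrinks the feasible set, and $P^\ast$ lies in it), and on the other hand the true latency obeys $d(P(t)) \le \sum_{e} w(e)\,\delta \le (1+\epsilon)T$. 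This simultaneously yields the cost bound and the $(1+\epsilon)$ latency violation claimed in the statement.

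For the algorithm itself I would run a layered shortest-path computation over the quantized latencies. Think of states $(v,\ell)$ meaning ``$v$ reached with total rounded latency exactly $\ell$ quanta'', for $\ell = 0,1,\dots,L$; an edge $(u,v)$ with $w(u,v)=k$ advances the level by $k$ at cost $c(u,v)$. I would process the levels $\ell = 0,\dots,L$ in increasing order, maintaining for every node the cheapest way to reach it at the current level. The delicate point is that edges with $w(e)=0$ (latencies below one quantum) stay \emph{within} a level and may chain arbitrarily, so a plain level-by-level relaxation is insufficient; within each level one must run a genuine single-source shortest-path (Dijkstra) computation on the zero-quantum subgraph, seeded by the labels propagated from lower levels. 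Since costs are non-negative, Dijkstra is correct here, and each of the $L+1$ levels costs $\mathcal{O}(m + n\log n)$, for a total of $\mathcal{O}\big((m + n\log n)\,D/\epsilon\big)$; all destinations $t$ are obtained from the same sweep by reading off $\min_{\ell\le L} C[\ell][t]$ and backtracking.

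I expect the main obstacle to be precisely this intra-level bookkeeping and its effect on the running time. A naive realization on the fully expanded layered graph (with $nL$ vertices) would incur an extra $\log L = \log(n/\epsilon)$ factor in the priority-queue term, whereas the claimed bound needs exactly $n\log n$ per level; obtaining it requires running $L$ \emph{separate} $n$-node Dijkstra computations that share finalized labels across levels, together with the argument that a node's label is finalized only once its level is fully processed. A secondary, lesser issue is that the error analysis depends on the hop count $D$ of the optimal path, which is not known in advance; I would sidestep this by using the universal bound $D \le n$ for simple paths, noting that any valid tighter estimate of $D$ only improves the running time.
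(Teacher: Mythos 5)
The paper itself contains no proof of this theorem---it is imported verbatim from Goel et al.---and your reconstruction follows essentially the same route as their original argument: quantize the edge \emph{latencies} rather than the costs (they round down with quantum $\delta=\epsilon T/D$ where you round up, an immaterial difference that only changes whether sub-quantum latencies map to $0$ or to $1$), observe that the optimal latency-feasible path survives in the quantized problem while any quantized-feasible path exceeds the true bound by at most a factor $1+\epsilon$, and solve the quantized problem by a level-indexed dynamic program with a genuine Dijkstra run inside each level to handle zero-quantum edges, giving $\mathcal{O}\big((m+n\log n)\,D/\epsilon\big)$ for all destinations at once. Your proposal is correct and matches that approach, including the key implementation point that per-level $n$-node Dijkstra computations sharing finalized labels (rather than one Dijkstra on the expanded layered graph) are what avoid an extra $\log(n/\epsilon)$ factor and yield exactly the claimed bound.
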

Besides changing the approximation objective, the algorithm by Goel et al. is of particular interest to us as it simultaneously computes paths to all target nodes. This fact allows us to speed up our VNEP approximation.

\subsubsection{The VNEP}
\label{sec:related-work:vnep}
The Virtual Network Embedding Problem has received 
much attention by the networking community over the last decade,
and we refer the reader to the extensive surveys on this topic~\cite{vnep-survey}. 
Much existing work revolves around heuristics~\cite{vnep, Lischka, meng2010improving} 
and exact algorithms based on mixed-integer programming~\cite{vnep} %
which is motivated by the fact that the VNEP is NP-hard and inapproximable~\cite{ton20hard}.
Notwithstanding, there are first results on 
polynomial-time approximation algorithms~\cite{ifip20vnep, rostTON20randround} in the resource augmentation model,
as well as on parametrized exact and approximation algorithms for very restricted problem instances~\cite{rost2019parametrized}. 
To the best of our knowledge, the work~\cite{rost2019parametrized} is the first and only to provide
an approximation algorithm which also accounts for latencies. We refer to the algorithm presented in~\cite{rost2019parametrized} as \AStrict. Comparing with our novel \AFlex algorithm, we show that \AFlex is significantly faster while only introducing negligible latency violations.

We conclude by noting that 
the Virtual Network Embedding Problem is related to various classic 
graph-theoretical problems such
as \emph{VLSI Graph Layout}~\cite{bhatt1984framework}, 
\emph{Graph Labeling}~\cite{chung1988labelings}, and \emph{Subgraph Isomorphism}~\cite{eppstein2002subgraph}.
The \emph{VLSI Graph Layout} problem typically deals with the question of how to 
minimize the layout area of a circuit on a chip, which corresponds to embedding
a request graph onto a two-dimensional grid such that the \emph{embedding area}
(the product of vertical and horizontal lines) is minimized~\cite{bhatt1984framework}.
In the \emph{Graph Labeling} problem~\cite{chung1988labelings}, the nodes of a (substrate) graph $G$
need to be labelled by distinct nodes of a (request) graph $H$ while embedding the edges of $H$ onto $G$ with a popular objective being to minimize
the total sum of distances. 
\emph{Graph Labeling} on line substrates is widely known as \emph{Minimum Linear Arrangement}~\cite{diaz2002survey}.
In contrast to the above problems, the VNEP 
explicitly allows for mapping several request nodes to a single substrate node and introduces capacities on nodes and edges, rendering it significantly harder to solve.

\subsection{Organization}

The remainder of this
paper is organized as follows.
In Section~\ref{section:problemdefinition},
we introduce the model and preliminaries. 
We then present our algorithms in 
Section~\ref{section:approach}.
Section~\ref{section:evaluation} discusses
implementation details and reports on our evaluation
results.
We conclude in Section~\ref{section:conclusion}.

\section{Model and Preliminaries}\label{section:problemdefinition}

The \textbf{substrate network} is given as a directed graph  
$G_S = (V_S, E_S)$. Each component of the network, 
that is, each substrate node $v_S \in V_S$ and each substrate edge $e_S \in E_S$, 
has a capacity $d_S : G_S \rightarrow \mathbb{R}_{\geq0}$. 
For nodes the capacity may refer, e.g., to the number of available CPU cores, and restricts the number of virtual nodes that can be mapped onto it. 
Further, each substrate component $x \in G_S$ may be attributed with a cost value 
$c_S(x) \in \mathbb{R}_{\geq0}$ for its usage. Edge latencies are given 
by the function $l_S : E_S \rightarrow \mathbb{R}_{\geq0}$ and represent 
the time delay between two neighboring substrate nodes.

A \textbf{request} is likewise represented by a directed graph $G_r = (V_r, E_r)$ with demands $d_r : G_r \rightarrow \mathbb{R}_{\geq0}$ for each virtual component and an associated latency bound \mbox{$T_r \in \mathbb{R}_{\geq0}$} such that all virtual edges of $E_r$ must be embedded with a lesser or equal latency. Every request $r$ yields a given profit $b_r \in \mathbb{R}_{\geq0}$ if it is successfully embedded in the substrate. We denote by $d_{\text{max}}(r, x)$  the maximal demand of any request element on the substrate resource $x \in G_S$.

A \textbf{mapping} represents how a request is embedded in the substrate. In our model we allow the specification of a set of forbidden nodes and edges with each request, i.e., the virtual nodes and edges may only be mapped on a subset of substrate nodes and edges. Formally, a \emph{valid} mapping of request $r$ onto the substrate $G_S$ is defined as a tuple $m_r=(m_r^V, m_r^E)$ of functions, such that:
 \begin{itemize}
    \item The function $m_r^V : V_r \rightarrow V_S$ assigns a \emph{valid} substrate node to every virtual node. A substrate node is valid for a request node if it has sufficient capacity and if it is not in the set of forbidden nodes for $r$.
    
    \item The function $m_r^E : E_r \rightarrow \mathcal{P}_S$ maps each virtual edge $(i,j) \in E_r$ to a \emph{valid} simple path in the substrate network connecting $m^V_r(i)$ to $m^V_r(j)$.
 \end{itemize}
 
\noindent With regard to latencies, a mapping is further called \emph{valid}
 \begin{itemize}
     \item[-] \emph{under the strict latency constraint} if it additionally fulfills
         $ \sum_{(u,v) \in m_r^E(i,j)} l_S(u,v) \leq T_r $ for all $(i,j) \in E_r $
        such that all latency bounds are met exactly, or 
     \item[-] \emph{under  a $(1+\epsilon)$-approximate latency constraint} for some $\epsilon > 0$ if it is valid and fulfills
        \begin{equation}\label{def:approxLatConstr}
            \sum \nolimits_{(u,v) \in m_r^E(i,j)} l_S(u,v) \leq (1+\epsilon) \cdot T_r
        \end{equation}
    for $(i,j) \in E_r$, allowing for small latency violations.
\end{itemize} 

For a valid mapping $m_r=(m_r^V, m_r^E)$ the induced resource allocation on a substrate element is denoted by $A(m_r, v) = \sum_{i \in V_r:m_r^V(i)=v} d_r(i)$ for nodes $v \in V_S$ and $A(m_r, e) = \sum_{(i,j) \in E_r:e \in m_r^E(i, j)} d_r(i, j)$ for edges $e \in E_S$. Furthermore, we denote by $A_{\text{max}}(r,x)$ the maximum allocation on $x \in G_S$ among all valid mappings. For a single request $r$ the \textbf{Valid Mapping Problem} (VMP) asks to find a valid mapping $m_r$ that minimizes the cost $c(m_r) = \sum_{x \in G_S} c_S(x) \cdot A(m_r, x)$.

For the definition of the VNEP a set of requests $\mathcal{R}$ is given.
We refer to a set of mappings $\{ m_r \} _{r \in \mathcal{R}' }$ for a subset of requests $\mathcal{R}' \subseteq \mathcal{R}$ as a feasible embedding
iff. the cumulative 
resource allocation on any substrate element does not exceed its capacity, i.e., if for all $x \in G_S$ it holds $\sum_{r \in \mathcal{R}'} A(m_r, x) \leq d_S(x)$. It is important to note that the validity of mappings only considers the feasibility of single node and edge mappings while the feasibility of embedding takes the cumulative resource allocations of a \emph{set} of mappings into account.
The (offline) \textbf{Virtual Network Embedding Problem} (VNEP) then is to find a feasible embedding $\{ m_r \}_{r \in \mathcal{R}'}$ of a subset of given requests $\mathcal{R}' \subseteq \mathcal{R}$ which maximizes the profit $\sum_{r \in \mathcal{R}'} b_r$.

The VNEP has been shown to be $\mathcal{NP}$-hard and inapproximable in many variants~\cite{ton20hard}. In this paper, we hence consider parametrized approximation algorithms under model relaxations, i.e., algorithms of polynomial runtime for specific graph classes whose solutions allow for capacity and latency violations. As derived by Rost et al.~in~\cite{rost2019parametrized}, to approximate the VNEP it suffices to solve the VMP (exactly) for each request which in turn requires to compute restricted shortest paths under latency constraints. That is, as a subroutine to solve the VMP, the RSP needs to be solved.

In the \textbf{Restricted Shortest Paths Problem} (RSP) we are given a directed graph $G = (V,E)$ where each edge $e \in E$ is associated with a cost $c_e$ and a latency $l_e$, both non-negative. Then for a given source $s \in V$ and target $t \in V$ the goal is to find a cost-minimal path from $s$ to $t$ such that the latency along this path does not exceed a given limit $T \in \mathbb{R}_{\geq0}$.
For the purposes of this paper, the graph $G$ will be the substrate network of the VNEP and the upper limit $T$ will equal the latency bound of the respective request.

Formally, the RSP can be expressed as a constrained optimization problem, which we define below. In the analysis we denote by $C_G(p) := \sum_{e \in p} c_e$ the total costs and by $L_G(p) := \sum_{e \in p} l_e$ the total latencies of a path $p$ in the graph $G$. The value $C_{\text{opt}}(G)$ will represent the costs of a cost-minimal path that satisfies the latency constraint in the graph $G$. Here, $P_{s, t}$ denotes the set of all paths from $s$ to $t$.
\[
\begin{aligned}
    & \underset{p}{\text{minimize}} 
    & & C_G(p) \\
    & \text{subject to} 
    & & L_G(p) \leq T \\
    & & & p \in P_{s, t}
\end{aligned}
\]

Note that the objective function only depends on the cost value of the path $p$ and not the latency value. The latency value is used merely as a constraint. As a result, two \emph{feasible} paths with equal cost value will also have the same objective value, regardless of their latency value.
\remove{
An algorithm $\mathcal{A}$ is an \emph{approximation scheme} for the RSP if for every Input $(I, \epsilon)$, consisting of an instance of the RSP $I$ and an error parameter $\epsilon > 0$, it outputs a solution $s$ such that
    $f_{RSP}(I, s) \leq (1 + \epsilon) \, \cdot C_{\text{opt}}(I)$
where  $f_{RSP}(I, s)$ is the objective value, i. e. cost of the solution-path produced by $\mathcal{A}$. We further introduce the definition of a $(\alpha, \beta)$-algorithm for the RSP. 
}

\section{Algorithms and Analysis}\label{section:approach}

\subsection{Algorithmic Framework}
In order to approximate the VNEP with latency constraints, we build upon  the framework by Rost et al.~\cite{rost2019parametrized}.
Their approach tackles the problem in multiple steps and is parametrized by the treewidth of the request graphs, a measure of similarity to trees, i.e., the algorithm's runtime is only polynomial if the maximal treewidth of the request graphs is a constant. 

The algorithm in~\cite{rost2019parametrized} works as follows. First, for each request graph $G_r$ a tree decomposition $\mathcal{T}_r$ of limited treewidth is computed. It is then shown that the Valid Mapping Problem (VMP) can be solved on this tree representation using the \textsc{DynVMP} algorithm using dynamic programming (in time and space exponential in the request's treewidth). Given the ability to solve the VMP (without latencies), the \emph{fractional} VNEP is then shown to be solvable via column generation techniques where the \textsc{DynVMP} algorithm is used as a separation oracle. This fractional solution can be interpreted as a `probability distribution' over the valid mappings constructed in the column generation step and can be easily converted into a solution to the VNEP via (repeated) randomized rounding. Altogether this approach results in an algorithm that produces approximate solutions to the VNEP \emph{without latencies} in parametrized time:

\begin{theorem}[Rost et al.~\cite{rost2019parametrized}] \label{theorem:approxFactors}
    There exists an approximation for the VNEP without latency constraints, which achieves at least an $\alpha = 1/3$ fraction of the optimal profit and the allocations on nodes and edges are within factors $\beta$ and $\gamma$ of the original capacities respectively with high probability.
    The values of $\beta, \gamma \geq 0$ are defined as 
    $\beta := 1 + \sigma \cdot \sqrt{2 \cdot \Delta(V_S) \cdot \log(\vert V_S \vert)}$ and 
    $\gamma := 1 + \sigma \cdot \sqrt{2 \cdot \Delta(E_S) \cdot \log(\vert E_S \vert)}$ with 
    $\Delta(X) := \max_{x \in X}{\sum_{r \in \mathcal{R} : d_{\text{max}}(r, x) > 0} (A_{\text{max}}(r,x) / d_{\text{max}}(r,x) )^2}$
    being the maximal sum of squared maximal allocation-to-capacity ratios over the resource set $X$ and the maximum demand-to-capacity ratio
    $\sigma :=  \max_{r \in \mathcal{R}, x \in G_S}{d_{\text{max}} (r, x) / d_S(x)}$. The algorithm's runtime is polynomial when the maximal request treewidth is a constant.
\end{theorem}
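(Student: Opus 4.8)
The plan is to follow the two-stage \emph{solve-the-LP-then-round} paradigm underlying the framework of Rost et al. First I would write down the natural configuration LP for the VNEP: introduce a variable $x_{r,m}\ge 0$ for every request $r$ and every valid mapping $m$ of $r$, impose $\sum_m x_{r,m}\le 1$ for each $r$ (a request is embedded at most once, fractionally), impose the cumulative capacity constraints $\sum_{r}\sum_m x_{r,m}\,A(m,z)\le d_S(z)$ for every substrate element $z\in G_S$, and maximize $\sum_r b_r\sum_m x_{r,m}$. Its optimum upper-bounds the integral VNEP optimum $\mathrm{OPT}$. The exponentially many columns are handled by column generation, where the pricing step asks, for given dual prices, for a minimum-cost valid mapping of a request, which is exactly the VMP. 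Here I would invoke \textsc{DynVMP} on the tree decomposition $\mathcal{T}_r$, whose dynamic program solves the VMP exactly in time and space exponential only in the treewidth; constant treewidth is what makes the pricing oracle, and hence the whole LP, solvable in polynomial time, giving the runtime claim.

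The second stage is randomized rounding. I would interpret, for each request independently, the values $(x_{r,m})_m$ together with the slack $1-\sum_m x_{r,m}$ as a probability distribution over ``pick mapping $m$'' and ``reject'', and draw one outcome per request. For the profit, linearity of expectation gives $\mathbb{E}[\text{profit}]$ equal to the LP value, which is $\ge\mathrm{OPT}$; bounding the probability $\Pr[\text{profit}<\tfrac13\mathrm{OPT}]$ below a constant is the mechanism that fixes the target fraction $\alpha=1/3$.

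For the capacity factors I would fix a substrate element $z$ and write the realized normalized load $S_z=\sum_r A(m_r,z)/d_S(z)$ as a sum of independent random variables, one per request, each lying in $[0,\,A_{\max}(r,z)/d_S(z)]$ and with $\mathbb{E}[S_z]\le 1$ by LP feasibility. A Hoeffding/Chernoff tail bound gives $\Pr[S_z\ge 1+t]\le\exp\!\big(-t^2/(2\sum_r (A_{\max}(r,z)/d_S(z))^2)\big)$. The key algebraic step is the factorization $A_{\max}(r,z)/d_S(z)=(A_{\max}(r,z)/d_{\max}(r,z))\cdot(d_{\max}(r,z)/d_S(z))$, so that $\sum_r (A_{\max}(r,z)/d_S(z))^2\le\sigma^2\,\Delta(V_S)$ for nodes (resp.\ $\sigma^2\,\Delta(E_S)$ for edges); choosing $t=\sigma\sqrt{2\,\Delta(V_S)\log|V_S|}$ then drives the per-element failure probability polynomially small, and a union bound over all $|V_S|$ nodes and $|E_S|$ edges yields that every load stays below $\beta$ (resp.\ $\gamma$) with high probability.

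Finally I would combine the three bad events, namely profit $<\tfrac13\mathrm{OPT}$, some node load $>\beta$, and some edge load $>\gamma$, by a union bound: since the profit event is controlled by a constant and the two capacity events fail only with vanishing probability, a single rounding satisfies all three desiderata simultaneously with positive probability, and repeating the rounding polynomially often amplifies this to high probability. The main obstacle is the capacity concentration: one must phrase the per-request contribution in exactly the normalized form above so that the variance proxy collapses to $\sigma^2\Delta(\cdot)$ and the tail, after the union bound, reproduces the stated $\beta,\gamma$, and keeping the constants consistent between the Hoeffding exponent and the $\sqrt{2\log}$ term is the delicate part. A secondary obstacle is arguing that column generation with the \textsc{DynVMP} oracle terminates in polynomial time for constant treewidth, which rests on the correctness and complexity of the dynamic program over $\mathcal{T}_r$.
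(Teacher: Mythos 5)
Your proposal is correct and follows essentially the same route as the paper, which does not reprove this theorem but cites it from Rost et al.\ and sketches exactly the pipeline you reconstruct: a tree-decomposition-based \textsc{DynVMP} dynamic program as the oracle (pricing/separation) for the fractional VNEP LP solved by column generation, followed by per-request randomized rounding with Chernoff/Hoeffding concentration and union bounds yielding $\alpha = 1/3$ and the stated $\beta, \gamma$ with high probability. Your flagged concerns (Hoeffding constants matching the $\sqrt{2\log(\cdot)}$ terms, polynomial termination of column generation for constant treewidth) are precisely the technical points handled in the cited work rather than in this paper.
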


The paper already outlines how latency constraints can be taken into account within this framework. As latencies only change the notion of validity of mappings and pertain to individual request graphs, the \textsc{DynVMP} algorithm needs to be adapted to return solutions respecting latency constraints. This restriction is handled when edge mappings are calculated by approximating the underlying Restricted Shortest Paths problem for each pair of substrate nodes and each request edge.

As the RSP needs to be solved for every pair of substrate nodes,
the \emph{All-Pairs Restricted Shortest Path Problem} (APRSP) needs to be solved (for each request edge). Given the $\mathcal{NP}$-hardness of the RSP, the APRSP can only be approximated.

\noindent \textbf{Algorithm STRICT.}
As discussed in the related work (cf.~Section~\ref{sec:related-work:rsp}), Lorenz and Raz proposed a strongly polynomial FPTAS to the RSP which can be easily extended to an algorithm for the APRSP by solving  $\mathcal{O}(|V|^2)$ problem instances for all substrate node pairs. We denote this adaption as \aprspstrict.
As the algorithm by Lorenz and Raz finds a path that is guaranteed to meet the latency bound $T$ and whose costs are at most $(1+\epsilon) \cdot C_{\text{opt}}$ if such a path exists, the resulting \textsc{DynVMP} adaption approximates the cost of the optimal valid mapping while strictly respecting all latency constraints.

\noindent \textbf{Algorithm FLEX.}
We propose a new algorithm to solve the APRSP subproblem, called \AFlex, which results from using the procedure by Goel et al.~\cite{goel2001efficient} to calculate latency-constrained shortest paths for the \textsc{DynVMP} algorithm instead of the FPTAS by Lorenz and Raz.

The approach comes with a trade-off. The algorithm by Goel et al. calculates cost-optimal paths at the expense of allowing for a violation of the latency constraint by a factor of up to $(1 + \epsilon)$. The approach starts with a coarse scaling of the edge latencies to integers. The modified problem is solved exactly using dynamic programming resulting in cost-minimal paths for a weakened latency constraint. If all paths are also valid for the $(1+\epsilon)$-approximate constraint, the algorithm terminates. Otherwise the process is repeated with a finer scaling until a solution is found.

The crucial advantage of the procedure by Goel et al. are that one execution gives the results for \emph{all} destination nodes at once for a given start node (cf. Theorem~\ref{theorem:goel}). This leads to a significant decrease in runtime as it only has to be executed $|V_S|$ times to produce paths between all pairs of source and target nodes. This subroutine, \aprspflex, only requires $\vert V_S \vert$ calls to Goel et al.'s  algorithm to prepare the cost and path tables for \textsc{DynVMP}.

Besides the fewer required subroutine calls, the algorithm has some additional benefit. Specifically, the algorithm progressively improves the approximation's quality which allows for early stopping of the algorithm in the case of strict computation time limits. In that case, if a path has been computed, it is cost-optimal for some weaker latency constraint.

\subsection{Consequences for Approximating the VNEP}\label{section:application}

In this section we analyze how the modifications within the \AStrict and the \AFlex algorithm influence the runtime and approximation quality of the surrounding Virtual Network Embedding framework.

Let $n := \vert V_S \vert$ be the number of nodes in the substrate and $m := \vert E_S \vert$ be the number of substrate edges. Then the runtime of a single execution of the FPTAS by Lorenz and Raz is bounded by $\mathcal{O} ( m n  \,  ( \log {\log{n}} + 1 / \epsilon ))$~\cite{lorenz2001simple}, which leads to a runtime for \aprspstrict of $ \mathcal{O} \big ( m n^3  \, ( \log {\log{n}} + 1 / \epsilon) \big ) $.

The path costs are $(1+\epsilon)$-approximated. How this factor translates to the approximation of the embedding profits has been thoroughly analyzed by Rost et al., leading to the following Theorem~\ref{theorem:strict}, which contains their results about the \AStrict algorithm. 

\begin{theorem}[\AStrict, Rost et al.~\cite{rost2019parametrized}]\label{theorem:strict}
    For $n \geq 3$ the \AStrict algorithm finds a solution to the VNEP under latency constraints with a profit of at least $1/3 \cdot (1 + \epsilon) ^{-3/2}$ of the optimal profit with high probability. The resource allocation approximation factors $\beta$ and $\gamma$ are the same as defined in Theorem~\ref{theorem:approxFactors}. The runtime is bounded by $\mathcal{O} ( \text{poly} ( \tau_{Strict}) )$ with
        $$ \tau_{Strict} = 
            \sum_{r \in \mathcal{R}} 
                n^2 \cdot \Big ( \,
                    \vert V_r \vert ^3 \cdot n^{2 \cdot \text{tw}(\mathcal{T}_r)}
                        + m \cdot n \cdot \big( \log {\log{n}} + \frac{1}{\epsilon} \big)
                \Big ), $$ where $tw(\mathcal{T}_r)$ denotes the (minimal) treewidth of the request graph $G_r$ (cf.~\cite{bodlaender1998partial}).
\end{theorem}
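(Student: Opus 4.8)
The plan is to obtain Theorem~\ref{theorem:strict} as a direct instantiation of the generic guarantee of Theorem~\ref{theorem:approxFactors}: the surrounding column-generation-plus-randomized-rounding pipeline is left intact, and the \emph{only} change is that the exact, latency-oblivious VMP oracle is replaced by the latency-aware oracle that results from feeding \textsc{DynVMP} with the cost tables produced by \aprspstrict. Accordingly I would split the argument into three parts: (a)~a per-request statement that this oracle solves the latency-constrained VMP with strict latency feasibility and cost within $(1+\epsilon)$ of optimum; (b)~a transfer lemma carrying this single cost-approximation factor through the fractional LP and the rounding to the profit bound $\tfrac{1}{3}(1+\epsilon)^{-3/2}$, while showing $\beta$ and $\gamma$ are inherited verbatim; and (c)~a routine composition of subroutine runtimes into $\tau_{Strict}$.

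For part~(a) I would argue as follows. By the guarantee of Lorenz and Raz, for every request edge $(i,j)$ and every ordered pair of substrate endpoints the returned path is simple, satisfies $L_{G_S}(\cdot)\le T_r$ \emph{exactly}, and has cost at most $(1+\epsilon)\,C_{\text{opt}}$. Since \textsc{DynVMP} assembles each candidate mapping from exact node costs and one such per-edge path cost, every entry of the cost table overestimates the true minimum latency-feasible cost by at most the factor $(1+\epsilon)$; hence the minimiser returned by \textsc{DynVMP} is strictly latency-feasible and has true cost at most $(1+\epsilon)\,c(m_r^\star)$. In other words, the oracle is a strict-latency, $(1+\epsilon)$-cost-approximate solver for the VMP.

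Part~(b) is the technical heart. Recall that in the column generation the dual prices play the role of the substrate costs $c_S$, so the VMP-with-prices is exactly the pricing/separation subproblem; a $(1+\epsilon)$-cost-approximate VMP is therefore a $(1+\epsilon)$-approximate separation oracle. I would establish an approximate-oracle version of Theorem~\ref{theorem:approxFactors} asserting that such an oracle still yields a VNEP solution of profit at least $\tfrac{1}{3}(1+\epsilon)^{-3/2}$ of the optimum with high probability. Here the factor $1/3$ is the rounding loss already present in Theorem~\ref{theorem:approxFactors}, whereas the $(1+\epsilon)^{-3/2}$ aggregates the cost-approximation's effect on the value of the fractional solution certified by the (prematurely terminating) approximate separation and on the subsequent bound relating the rounded profit to that fractional value. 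Pinning down that these contributions compound to exactly the exponent $3/2$---rather than something larger---is the delicate bookkeeping carried out by Rost et al., and it is the step I expect to be the main obstacle. Crucially, the factors $\beta,\gamma$ are unaffected, because every column produced is a \emph{genuine} valid mapping with real allocations $A(m_r,x)$; the concentration argument yielding $\beta$ and $\gamma$ (and the mild requirement $n\ge 3$ ensuring the logarithms in their definition are meaningful) is therefore identical to the exact-VMP case.

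Finally, for part~(c) I would compose the runtimes. Preparing the cost and path tables for one request amounts to solving the APRSP via \aprspstrict, i.e.\ $\mathcal{O}(n^2)$ invocations of the Lorenz--Raz FPTAS at $\mathcal{O}(mn(\log\log n+1/\epsilon))$ each, giving $\mathcal{O}(mn^3(\log\log n+1/\epsilon))$; running \textsc{DynVMP} over the tree decomposition contributes the parametrized term $|V_r|^3 n^{2\,tw(\mathcal{T}_r)}$, and the column generation performs only polynomially many such calls. Summing over $r\in\mathcal{R}$ and collecting the shared $n^2$ factor reproduces $\tau_{Strict}$, and since all remaining overhead (tree-decomposition computation, LP solves, and repeated rounding) is polynomial, the total runtime is $\mathcal{O}(\mathrm{poly}(\tau_{Strict}))$, which is polynomial precisely when $\max_r tw(\mathcal{T}_r)$ is constant.
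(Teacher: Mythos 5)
Your proposal is correct and takes essentially the same approach as the paper: the paper does not prove Theorem~\ref{theorem:strict} itself but imports it from Rost et al.~\cite{rost2019parametrized}, justifying it only by your parts (a) and (c) --- that the Lorenz--Raz FPTAS turns \textsc{DynVMP} into a strict-latency, $(1+\epsilon)$-cost-approximate VMP oracle, and that $\mathcal{O}(n^2)$ FPTAS calls at $\mathcal{O}(mn(\log\log n + 1/\epsilon))$ each give the second term of $\tau_{Strict}$. Your part (b), whose $(1+\epsilon)^{-3/2}$ bookkeeping you explicitly defer to Rost et al., is precisely the step the paper also defers to the cited work, so your outline is as complete as the paper's own treatment.
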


For the \AFlex algorithm, the algorithm's runtime is again very much determined by the runtime to solve the APRSP. As analyzed by Goel et al.~\cite{goel2001efficient}, it takes $\mathcal{O} ( (m + n \, \log{n} ) \, n / \epsilon )$ time to calculate the paths from one source node to all target nodes (cf. Theorem~\ref{theorem:goel}). The algorithm is called $n$ times leading to a runtime of $ \mathcal{O} \big ( ( m + n \log{n}  ) \cdot n^2/\epsilon \big ) $ for \aprspflex.

The paths, and therefore the embeddings, are cost-optimal for the (1+$\epsilon$)-relaxed latency constraint. This result carries over similarly to the analysis for \AStrict, with cost-optimal paths, leading to a total profit approximation factor of $\alpha_{Flex} = 1/3$. This result is summarized in the following theorem. 

\begin{theorem}[\AFlex]\label{theorem:contribution:flex}
    For $n \geq 3$ the \AFlex algorithm finds a solution to the VNEP under $(1 + \epsilon)$-approximate latency constraints with a profit of at least $1/3$ of the optimal profit with high probability. The resource allocation approximation factors $\beta$ and $\gamma$ are the same as defined in Theorem~\ref{theorem:approxFactors}. The runtime is bounded by $\mathcal{O} ( \text{poly} ( \tau_{Flex}) )$ with
        $$\tau_{Flex} =  
            \sum_{r \in \mathcal{R}} 
                n^2 \cdot \Big ( \,
                    \vert V_r \vert ^3 \cdot n^{2 \cdot \text{tw}(\mathcal{T}_r)}
                        + \frac{ m + n \log{n}}{\epsilon}
                \Big ). $$ 
\end{theorem}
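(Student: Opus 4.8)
The plan is to leave the Rost et al.\ framework of Theorem~\ref{theorem:approxFactors} completely intact and to change only the routine that resolves edge mappings inside \textsc{DynVMP}: wherever \AStrict invokes \aprspstrict, \AFlex instead invokes \aprspflex, i.e.\ it calls the procedure of Goel et al.\ (Theorem~\ref{theorem:goel}) once for every substrate source node. First I would make this substitution explicit so that the proof reduces to two essentially independent tasks: re-deriving the profit factor and re-deriving the runtime. I would also note up front that the resource-allocation factors $\beta,\gamma$ and the ``with high probability'' qualifier require no new argument, since both originate solely in the randomized-rounding step of the framework, which is identical in the two algorithms.

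For the profit guarantee I would start from the cost promise of Theorem~\ref{theorem:goel}: for each request edge and each source--target pair, \aprspflex returns a path whose latency overshoots the bound $T_r$ by at most a factor $1+\epsilon$ while its cost lies at or below the cost of \emph{any} strictly latency-feasible $s$--$t$ path. Under the relaxed notion of validity in \eqref{def:approxLatConstr} this means \textsc{DynVMP} is handed cost-minimal edge mappings and hence solves the Valid Mapping Problem \emph{exactly} in the relaxed model. That is exactly the input Theorem~\ref{theorem:approxFactors} was proved for, so I would invoke it as a black box to obtain a profit of at least $1/3$ of the optimum. Because the relaxed feasible set contains the strict one, the relaxed optimum dominates the strict optimum, and the $1/3$ bound therefore also holds against the optimal profit of the original, strictly constrained problem.

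I expect the main obstacle to be pinpointing \emph{why} the factor $(1+\epsilon)^{-3/2}$ of \AStrict disappears. I would trace the analysis behind Theorem~\ref{theorem:strict} and argue that the only place the RSP cost-approximation factor entered was through the cost of the mappings returned by \aprspstrict, which is $(1+\epsilon)$-approximate and propagates multiplicatively through the column-generation oracle and the rounding step. Since Goel's paths are cost-optimal (or better), there is no multiplicative error to propagate, so that factor collapses to $1$ and the clean $\alpha_{Flex}=1/3$ survives; the latency slack tolerated in \eqref{def:approxLatConstr} is precisely the currency with which this sharper cost guarantee is bought. The delicate point is to verify that an exact VMP solution feeds genuinely unperturbed inputs into the rounding analysis, i.e.\ that no hidden dependence on the cost-approximation factor remains elsewhere in the framework.

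Finally, for the runtime I would split the work into the \textsc{DynVMP} dynamic program, whose cost is unchanged from \AStrict and accounts for the terms $n^2\cdot\vert V_r\vert^3\cdot n^{2\,\text{tw}(\mathcal{T}_r)}$, and the APRSP subroutine. Using $D\le n$ in Theorem~\ref{theorem:goel}, one source-to-all-targets computation costs $\mathcal{O}((m+n\log n)\,n/\epsilon)$, and calling it once per source node makes \aprspflex cost $\mathcal{O}((m+n\log n)\,n^2/\epsilon)$, which is the summand $n^2\cdot(m+n\log n)/\epsilon$. Summing both contributions over all requests $r\in\mathcal{R}$ and folding the framework's remaining polynomial overhead (the number of column-generation rounds and rounding repetitions, and the per-request-edge count) into the $\mathcal{O}(\text{poly}(\cdot))$ wrapper yields the claimed bound $\tau_{Flex}$.
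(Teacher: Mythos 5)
Your overall route is the same as the paper's: swap \aprspstrict for \aprspflex inside \textsc{DynVMP}, argue that the resulting mappings are valid under the relaxed constraint \eqref{def:approxLatConstr} while their cost is no worse than the strict optimum, hand the rest back to the framework of Rost et al., and do the runtime arithmetic ($n$ calls to Goel's procedure at $\mathcal{O}((m+n\log n)\,n/\epsilon)$ each, plus the unchanged dynamic program). Your runtime paragraph, your treatment of $\beta,\gamma$ and the high-probability qualifier, and your explanation of why the $(1+\epsilon)^{-3/2}$ factor of \AStrict collapses to $1$ all match the paper's reasoning.

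The one unsound link is the claim that \textsc{DynVMP} with \aprspflex ``solves the Valid Mapping Problem \emph{exactly} in the relaxed model,'' from which you invoke Theorem~\ref{theorem:approxFactors} as a black box on the relaxed instance. Theorem~\ref{theorem:goel} only guarantees that each returned path is relaxed-feasible with cost at most the optimum over \emph{strictly} feasible paths; it does not bound the cost against the optimum over \emph{relaxed}-feasible paths (a path with latency between $T_r$ and $(1+\epsilon)T_r$ may be cheaper than anything Goel's algorithm returns). So the oracle is exact for neither the strict nor the relaxed VMP, and the black-box application of Theorem~\ref{theorem:approxFactors} to the relaxed problem --- and with it your intermediate bound of ``$1/3$ of the relaxed optimum'' --- does not follow. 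The paper is careful on exactly this point: its Lemma~\ref{theorem:contribution.goel} states the hybrid guarantee (a relaxed-valid mapping whose objective is optimal with respect to the \emph{original} latency bound, whenever a strictly valid mapping exists) and then re-traces the framework's separation-oracle and rounding argument with that oracle, rather than citing Theorem~\ref{theorem:approxFactors} wholesale. The repair is already on your page: the cost-domination property you state at the start of that paragraph is precisely what makes the column-generation LP value dominate the strict optimum, which yields $1/3$ of the strict optimum directly --- no detour through a relaxed optimum is needed (or available).
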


To substantiate the claim of the above theorem, we argue for its correctness in the following.

\begin{lemma}\label{lemma:flexMappingsValid}
    Any mapping returned by the \textsc{DynVMP} using \aprspflex to compute restricted shortest paths is valid under a  $(1+\epsilon)$-approximate latency bound.
\end{lemma}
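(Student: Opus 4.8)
The plan is to trace the per-path latency guarantee of Goel et al.\ (Theorem~\ref{theorem:goel}) through the mapping that \textsc{DynVMP} assembles, and then to check that the result meets every clause in the definition of validity under a $(1+\epsilon)$-approximate latency bound. First I would recall how \textsc{DynVMP} builds a mapping $m_r = (m_r^V, m_r^E)$: it commits to a node mapping $m_r^V$ and then, for each request edge $(i,j) \in E_r$ with endpoints $s := m_r^V(i)$ and $t := m_r^V(j)$, it sets $m_r^E(i,j)$ to the substrate path stored in the precomputed cost/path table for the source--target pair $(s,t)$. Under \aprspflex, those tables are populated exactly by the paths $P(t)$ returned by the algorithm of Goel et al., which is invoked once per substrate source node $s$ with the latency limit set to the request bound $T_r$.

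The core step is then a direct appeal to Theorem~\ref{theorem:goel}. For every source $s$ and every target $t$, the stored path $P(t)$ satisfies $L_{G_S}(P(t)) \le (1+\epsilon)\cdot T_r$, because Goel's guarantee is precisely that the latency bound is violated by at most a factor $1+\epsilon$. Hence for each request edge $(i,j)$ the assigned path obeys $\sum_{(u,v)\in m_r^E(i,j)} l_S(u,v) \le (1+\epsilon)\cdot T_r$, which is exactly inequality~\eqref{def:approxLatConstr}. Since this bound holds simultaneously for all $(i,j)\in E_r$, the relaxed latency clause of the definition is satisfied, and I would note that only the latency half of Goel's guarantee is needed here --- the cost-optimality half is what later drives the profit factor in Theorem~\ref{theorem:contribution:flex}, not this lemma.

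It then remains to confirm base validity, namely that $m_r^V$ maps only to permitted, sufficiently capacitated substrate nodes and that each $m_r^E(i,j)$ is a valid simple substrate path from $s$ to $t$; this follows from the correctness of \textsc{DynVMP} itself, since the dynamic program commits only to node and edge mappings drawn from the admissible search space. The step I expect to need the most care is precisely this interface between the Goel subroutine and \textsc{DynVMP}: one must verify that the paths returned are genuine simple $s$-$t$ paths eligible as edge mappings, and that feeding cost-optimal but latency-relaxed paths into the dynamic program respects every invariant \textsc{DynVMP} relies on. Because \textsc{DynVMP} treats each precomputed path as an opaque, already-admissible edge mapping and optimizes only over costs and node assignments, the latency relaxation stays confined to the individual edge mappings, so the two guarantees compose without interference and the returned mapping is valid under the $(1+\epsilon)$-approximate latency bound.
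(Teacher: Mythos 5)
Your proposal is correct and follows essentially the same route as the paper's own proof: the validity of the node mapping is inherited from the correctness of \textsc{DynVMP} without latencies, and the $(1+\epsilon)$-approximate latency clause for every request edge follows from a direct appeal to Theorem~\ref{theorem:goel}, since \aprspflex{} populates the path tables with Goel et al.'s paths, each of which satisfies inequality~\eqref{def:approxLatConstr}. Your additional discussion of the interface between the precomputed path tables and the dynamic program is a more careful elaboration of what the paper states in one line, but it does not change the substance of the argument.
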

\begin{proof}
    Let $m_r=(m_r^V, m_r^E)$ be the mapping returned by the \textsc{DynVMP} procedure using \aprspflex. The validity of the node mapping $m_r^V$ follows from the correctness of the \textsc{DynVMP} procedure without latency considerations. The request edge mapping $m_r^E$ is valid as it maps to the set of paths calculated by Goels algorithm. By Theorem~\ref{theorem:goel} all paths in this set satisfy equation~\ref{def:approxLatConstr}. 
\end{proof}

Next we deduce that the modified \textsc{DynVMP} algorithm functions correctly under the relaxed latency constraint.

\begin{lemma}\label{theorem:contribution.goel}
    The modified \textsc{DynVMP} procedure which uses \aprspflex to calculate restricted shortest paths produces a mapping that is valid under the $(1+\epsilon)$-approximate latency bound and of optimal objective w.r.t. the original latency bound, if such a mapping exists.
\end{lemma}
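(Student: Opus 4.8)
The plan is to split the statement into its two assertions and dispatch them separately. The validity of the returned mapping under the $(1+\epsilon)$-approximate latency bound is exactly the content of Lemma~\ref{lemma:flexMappingsValid}, so I would simply invoke it. The substance of the claim is therefore the second assertion: that the cost of the mapping produced by the modified \textsc{DynVMP} is at most $C_{\text{opt}}$, the cost of a cost-minimal mapping that is valid under the \emph{original} (strict) latency bound $T_r$ --- that is, ``optimal or better'' in the sense of Theorem~\ref{theorem:goel}. I would make this the focus of the proof and condition, as the statement does, on such a strict-valid mapping existing.

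The key structural observation I would rely on is that \textsc{DynVMP} uses the APRSP routine purely as a black-box oracle: for every request edge $(i,j) \in E_r$ and every ordered pair of substrate nodes $(u,w)$ it consults a single cost entry --- the cost of the path the oracle returns between $u$ and $w$ for that edge --- and the dynamic program over the tree decomposition $\mathcal{T}_r$ then minimizes the total cost over all \emph{node} assignments using these tabulated edge-mapping costs. All latency feasibility is pushed into the path computation, so the DP itself never re-examines latencies. Consequently, by the correctness of \textsc{DynVMP}~\cite{rost2019parametrized}, the returned mapping minimizes total cost with respect to whatever edge-cost table it is given, provided each entry is realized by an actual substrate path; that the Goel entries are such realizable paths is immediate, since Theorem~\ref{theorem:goel} returns genuine $u$--$w$ paths.

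The comparison that closes the argument proceeds as follows. Let $m_r^{*} = (m_r^{V*}, m_r^{E*})$ be a cost-minimal mapping valid under the strict bound, with cost $C_{\text{opt}}$. Fix a request edge $(i,j)$ and set $u := m_r^{V*}(i)$ and $w := m_r^{V*}(j)$; the strict mapping routes $(i,j)$ along some $T_r$-feasible $u$--$w$ path, whose cost is at least the optimal cost of a strict-latency $u$--$w$ path. By Theorem~\ref{theorem:goel}, the path \aprspflex records for $(u,w)$ has cost beneath that optimal strict cost, hence no larger than the cost $m_r^{*}$ pays on this edge. Summing over all request edges, the node assignment $m_r^{V*}$ combined with the Goel paths yields a mapping whose total cost is at most $C_{\text{opt}}$ and which is valid under the $(1+\epsilon)$-approximate bound by Theorem~\ref{theorem:goel}. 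Since the modified \textsc{DynVMP} returns a \emph{minimum}-cost mapping over its (Goel-based) edge-cost table, its output cost is at most this, i.e., at most $C_{\text{opt}}$. Together with Lemma~\ref{lemma:flexMappingsValid} this establishes both assertions.

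The main obstacle I anticipate is the middle step: rigorously justifying that the optimality guarantee of \textsc{DynVMP} transfers verbatim when the exact restricted-shortest-path oracle is swapped for Goel's relaxed one. This hinges on the modularity of the Rost et al.~framework --- that the DP's correctness depends only on each edge-cost entry being the cost of a realizable path and on the DP minimizing over node assignments, not on those paths meeting the strict bound. I would argue this by inspecting how \textsc{DynVMP} consumes the cost and path tables, emphasizing that latency is enforced solely within the path oracle, so that relaxing it cannot break the DP's node-level optimization but can only enlarge the set of admissible paths, which in turn can only lower the attainable cost.
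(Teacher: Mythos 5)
Your proposal is correct and takes essentially the same route as the paper's own proof: validity is delegated to Lemma~\ref{lemma:flexMappingsValid}, and cost-optimality follows from Theorem~\ref{theorem:goel}'s per-pair guarantee (Goel paths cost no more than the optimal strictly latency-bounded paths) combined with the fact that the \textsc{DynVMP} dynamic program still minimizes correctly over node assignments since only the path oracle was swapped. Your explicit exchange argument --- fixing the optimal strict mapping's node assignment and replacing its edge paths by the \aprspflex paths --- is in fact a more rigorous rendering of the step the paper merely asserts (``the constructed mappings will be of optimal objective as the \textsc{DynVMP} algorithm computes optimal node mapping costs and hence optimal overall costs'').
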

\begin{proof}
    From Theorem~\ref{theorem:goel} it follows that \aprspflex always returns paths of optimal cost that violate the latency bound by at most a factor of $1+\epsilon$ (cf. Theorem~\ref{theorem:goel}). Whenever a path of latency value at most $T_r$ exists, the algorithm must return a path with latency value bounded by $(1+\epsilon) \cdot T_r$ and of objective at most the optimal cost (cf. Theorem~\ref{theorem:goel}). Hence, if a valid mapping for the original latency constraint exists, there will exist a valid edge mapping under the weakened latency constraint. Therefore a mapping will be produced and by Lemma~\ref{lemma:flexMappingsValid} it will also be valid under the $(1+\epsilon)$-approximate latency bound.

    The \textsc{DynVMP} algorithm accordingly correctly determines whether a valid mapping exists and if so, returns a cost-optimal one as only the path-computation was adapted. Since Goel et al.'s algorithm returns paths of optimal costs with respect to all valid paths that satisfy the strict latency constraint, the constructed mappings will be of optimal objective as the \textsc{DynVMP} algorithm computes optimal node mapping costs and hence optimal overall costs.
\end{proof}

Finally we conclude that the rest of the proof proceeds analogously to the original proof by Rost et al. ~\cite{rost2019parametrized}, namely the modified \textsc{DynVMP} procedure can be used to separate the constraints and serve as separation oracle for the LP that solves the fractional VNEP. This fractional solution is then transformed into valid embeddings using randomized rounding.

To assess the overhead of considering latencies compared to the baseline implementation without latencies, we state the runtime when using  Dijkstra's algorithm for every source node (cf.~Theorem~\ref{theorem:approxFactors}):

{
	\vspace{-8pt}
	\small
\begin{align}
\mathcal{O} 
\left ( 
	\text{poly} \left ( 
		\sum_{r \in \mathcal{R}} n \cdot \left ( 
			\, \vert V_r \vert ^3 \cdot n^{2 \cdot \text{tw}(\mathcal{T}_r) + 1} + ( m + n^2 ) 
			 \right ) 
	\right )  
\right ) . 
\end{align}
}

\section{Empirical Evaluation}\label{section:evaluation}

To complement our theoretical contribution and investigate the performance
of our algorithms in practice, we implemented both approximation
algorithms, \AFlex and \AStrict and evaluated them in realistic settings. Given the limited scalability of \AStrict, our main evaluation only uses small to medium sized substrate networks. To further substantiate the benefits of our novel algorithm \AFlex, we also conduct explorative experiments on larger substrate networks, comparing \AFlex to the baseline algorithm which does not consider latencies.

\subsection{Implementation}\label{section:implementation}

We implemented the two approximation algorithms
in Python 3, building upon the implementation by Rost et al.
\footnote{see \href{https://github.com/vnep-approx}{github.com/vnep-approx}} 
Our implementations of the \AFlex and the \AStrict together with the evaluation are publicly available at \href{http://www.github.com/vnep-approx-latency}{github.com/vnep-approx-latency}.

The implementation of the \AStrict algorithm, which uses the algorithm by Lorenz and Raz~\cite{lorenz2001simple},
closely follows the pseudo code provided in their paper.
The implementation of the \AFlex algorithm, based on the work by Goel et al.~\cite{goel2001efficient},
relies on a dynamic programming subroutine which
assumes integer edge latencies and iterates until some delay threshold is met. 
Because some substrate nodes may not be reachable under latency constraints 
for some source node, costs and paths need not always exist. To reduce memory usage, the implementation only stores costs and paths when they exist.

The runtime of both RSP algorithms was improved using the following optimizations.

 \paragraph{Removing infeasible nodes}
		 For an infeasible target node, the algorithm by Lorenz and Raz can only
		conclude that no valid path exists at the very end of its execution.
        A much faster solution is hence to run one execution of 
		a shortest paths algorithm, using the edge latencies as minimization objective, starting from each node before calling the RSP algorithm. Every node whose distance in latencies is greater than the limit cannot be reached by any feasible path. 
		Conversely, if a node's latency-distance from the source is within the limit, then there must be a feasible path and the algorithm has to find a solution. 
		The experiments showed up to a 20$\times$ faster execution time with this optimization.
	
 \paragraph{Optimizing data structures}
        At numerous points in the execution, the algorithm 
				has to check if a given edge is valid. To speed up 
				ckecking list membership, we store such information 
				in a hash map.
				
 \paragraph{Avoiding re-allocations}
        The algorithms require large tables in which to store their results. 
				It can be seen that the distance table of the 
				algorithm by Lorenz and Raz 
				is of size $\mathcal{O}(n^2/\epsilon)$
				which can be quite large for small $\epsilon$. 
        All tables are reset when necessary, 
				yet the distances tables for both algorithms grow dynamically,
				and in both algorithms the size of the distances table is not constant 
				in the second dimensions between calls to the procedures. 
				We hence initialize the algorithms with some value for the second dimension,
				and only if this value is too small in some iteration, the tables are re-allocated. 
        
Besides our two algorithms for the VNEP with latencies, we also evaluate a baseline
algorithm, henceforth simply called \emph{no latencies} or \emph{baseline}: the current state-of-the-art algorithm for the VNEP \emph{without} latencies~\cite{rost2019parametrized}.

\subsection{Computational Setup}

In the following, we describe the computational setup for our main experiments to compare the performance of \AFlex, \AStrict, and the baseline. For the second set of experiments on larger substrate graphs, the computational setup is given separately in Section~\ref{sec:evaluation:aflex-performance}.

We consider five real-world networks from the Topology Zoo in our evaluation~(see Table \ref{tab:zoo}). To impose meaningful latency limits, we compute the average substrate edge latency $\phi(G_S)$ based on the geographic information of the adjacent nodes stored in the Topology Zoo. 

\begin{table}[t!]
    \centering
	\begin{tabular}{ c|c|c } 
	Substrate Network & Nodes & Edges \\ %
	\hline
	Netrail  & 6 & 20 \\  %
	Eunetworks  & 14 & 38 \\ %
	Noel  & 18 & 50 \\  %
	Oxford  & 19 & 52 \\  %
	Funet  & 25 & 62 %
\end{tabular}
\caption{Networks for comparing \AFlex and \AStrict}
\label{tab:zoo}
\end{table}

The general experiment design closely follows Rost and Schmid and we shortly summarize the key points. Specifically, we employ the same procedures to create substrate and request graphs. The request graph topologies are cactus graphs which are created at random such that each graph has between 4 to 15 nodes. To enforce the distributed placement of nodes, each virtual node may only be mapped to a quarter of the substrate nodes. As all studied algorithms scale alike in the number of requests, we fix the number of requests per scenario to be 30. For each of the requests, the profit is set proportionally to the minimal node and edge resource usage. We consider the following parameters to draw resource demands given uniform substrate node and edge capacities (cf.~\cite{rostTON20randround}).

\noindent\textbf{Node resource factor} (\nrf): We consider values in $\{0.3, 0.8\}$, implying (averaged) node utilizations of 30\% and 80\%, respectively.

\noindent\textbf{Edge resource factor} (\erf): We consider values in $\{0.3, 0.8\}$, such that the cumulative bandwidth demand of all requests equals all available bandwidth capacities divided by \erf. Accordingly, edge resources are generally scarce.

\begin{figure*}[t!]
	\begin{minipage}{0.485\textwidth}
		\begin{subfigure}{0.495\textwidth}
			\includegraphics[width=\linewidth]{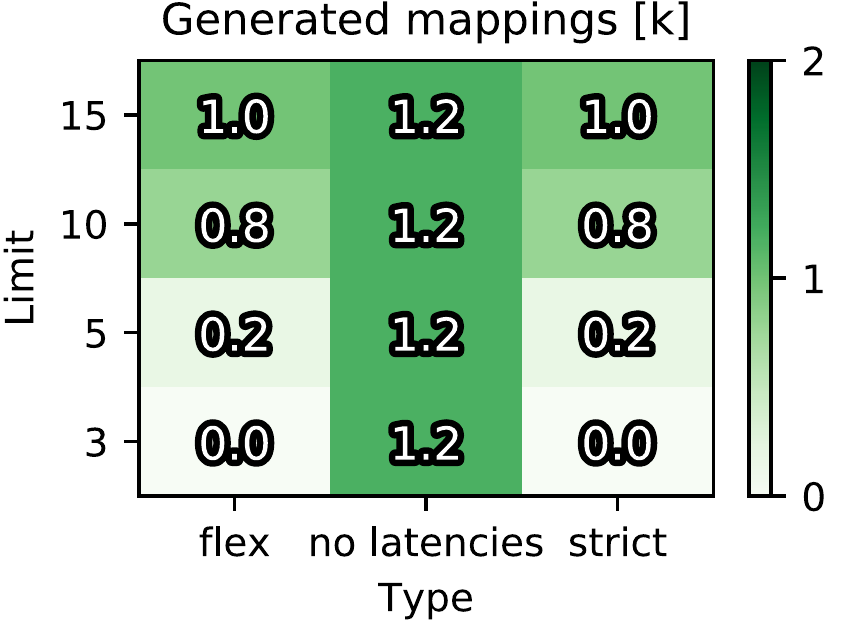}
			\caption{}
			\label{fig:sparserequests.mappings}
		\end{subfigure}
		\hfill
		\begin{subfigure}{0.495\textwidth}
			\includegraphics[width=\linewidth]{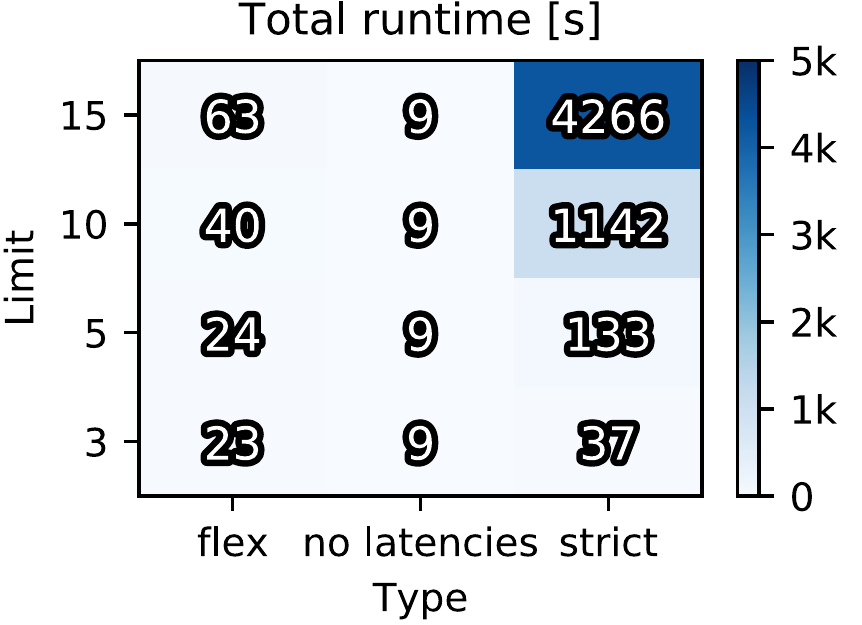} 
			\caption{}
			\label{fig:sparserequests.runtime}
		\end{subfigure}
		\caption{Number of generated mappings and total runtime for expected low node and high edge utilizations. Parameters: \erf: 0.3, \nrf: 0.3, $\epsilon$: 0.1, averaged over all topologies.}
		\label{fig:sparserequests}
	\end{minipage}	
	\hfill
	\begin{minipage}{0.485\textwidth}
		\begin{subfigure}{0.495\textwidth}
			\includegraphics[width=\linewidth]{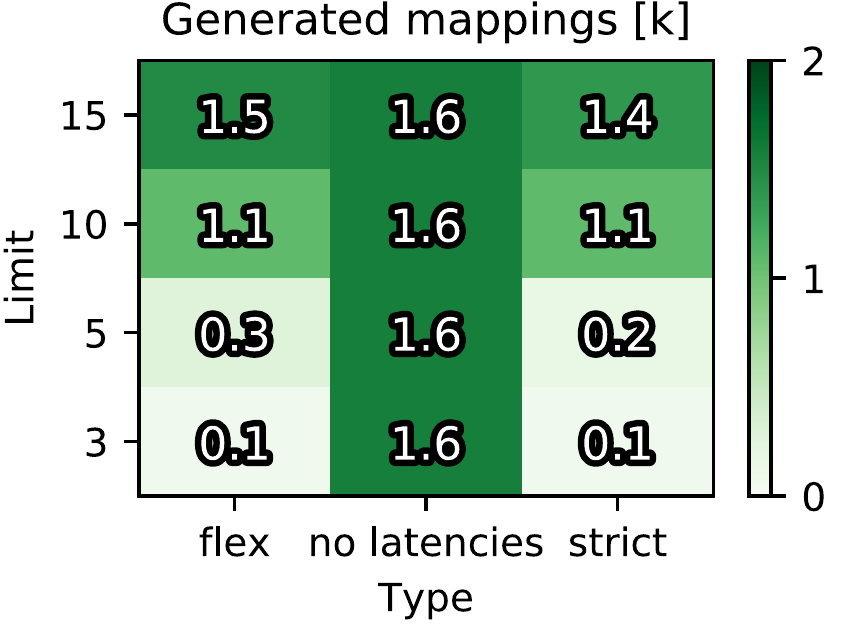}
			\caption{}
			\label{fig:denserequests.mappings}
		\end{subfigure}
		\hfill
		\begin{subfigure}{0.495\textwidth}
			\includegraphics[width=\linewidth]{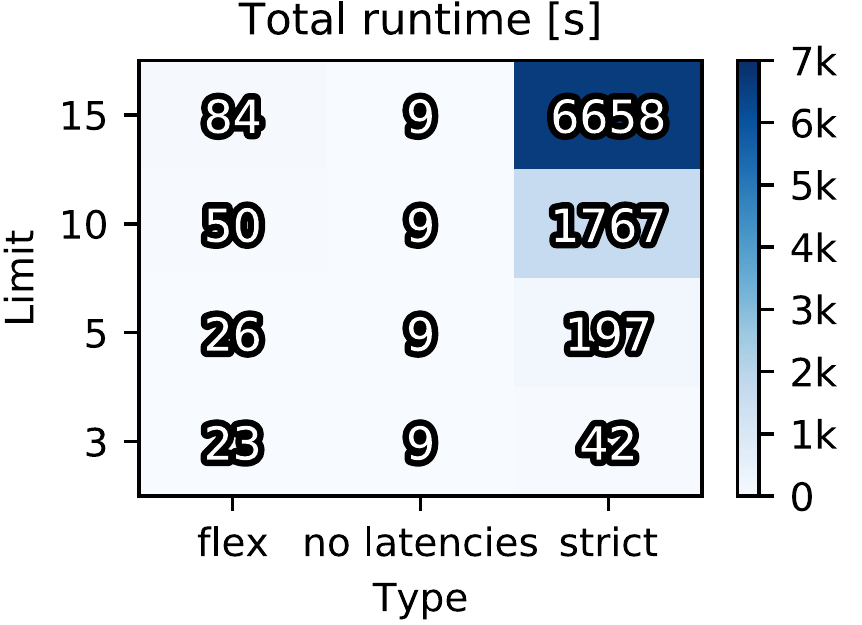} 
			\caption{}
			\label{fig:denserequests.runtime}
		\end{subfigure}
		\caption{Number of generated mappings and total runtime for expected high node and lower edge utilizations. Parameters: \erf: 0.8, \nrf: 0.8, $\epsilon$: 0.1, averaged over all topologies.}
		\label{fig:denserequests}
	\end{minipage}
	
	\vspace{12pt}
	
	\begin{minipage}{0.485\textwidth}
		\begin{subfigure}{0.495\textwidth}
			\includegraphics[width=\linewidth]{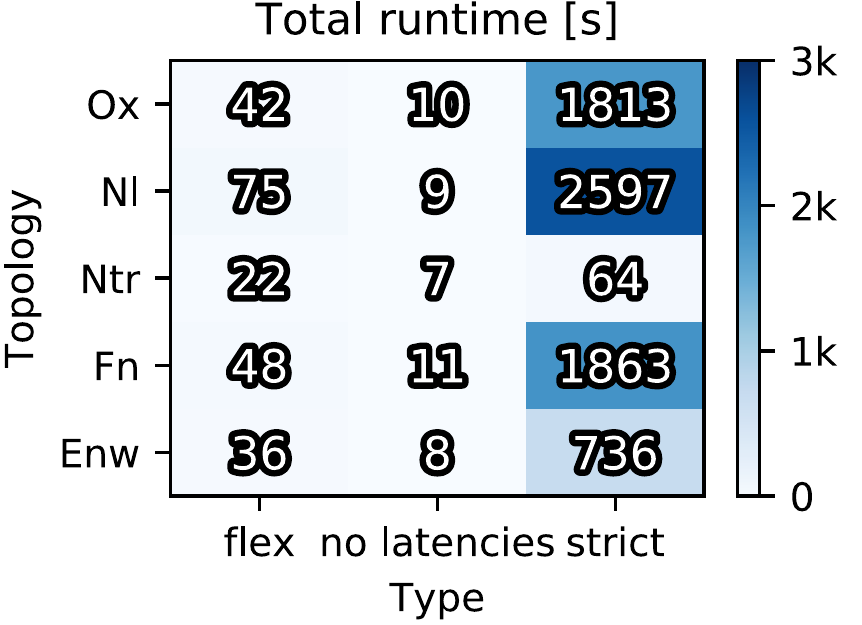} 
			\caption{$\epsilon = 0.1$}
			\label{fig:substratecomp.1}
		\end{subfigure}%
		\hfill
		\begin{subfigure}{0.495\textwidth}
			\includegraphics[width=\linewidth]{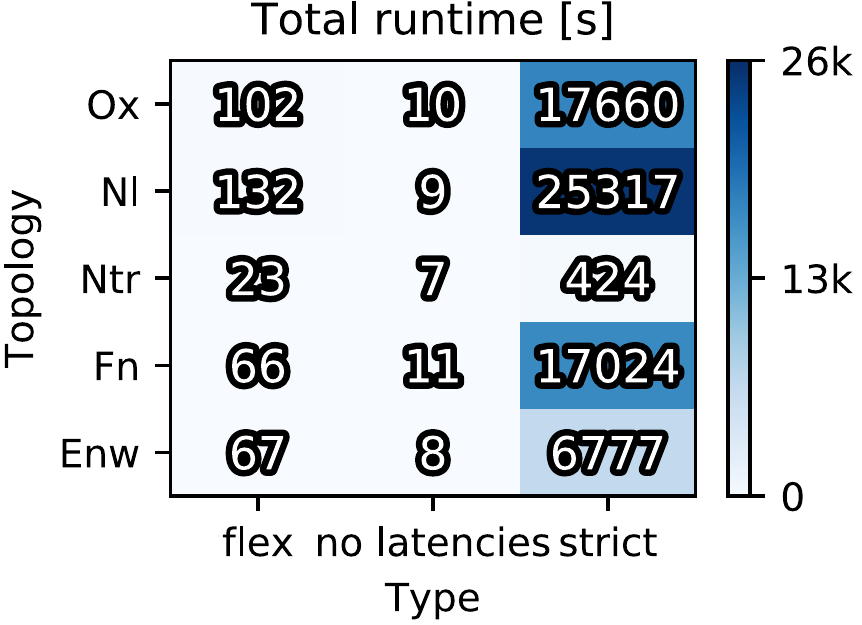} 
			\caption{$\epsilon = 0.02$}
			\label{fig:substratecomp.02}
		\end{subfigure}
		\caption{Total runtime per substrate network, split by different values of $\epsilon$. Parameters: limit: 10, averaged for \erf and \nrf.}
		\label{fig:substratecomp}
	\end{minipage}
	\hfill
	\begin{minipage}{0.485\textwidth}
		\begin{subfigure}{0.495\textwidth}
			\includegraphics[width=\linewidth]{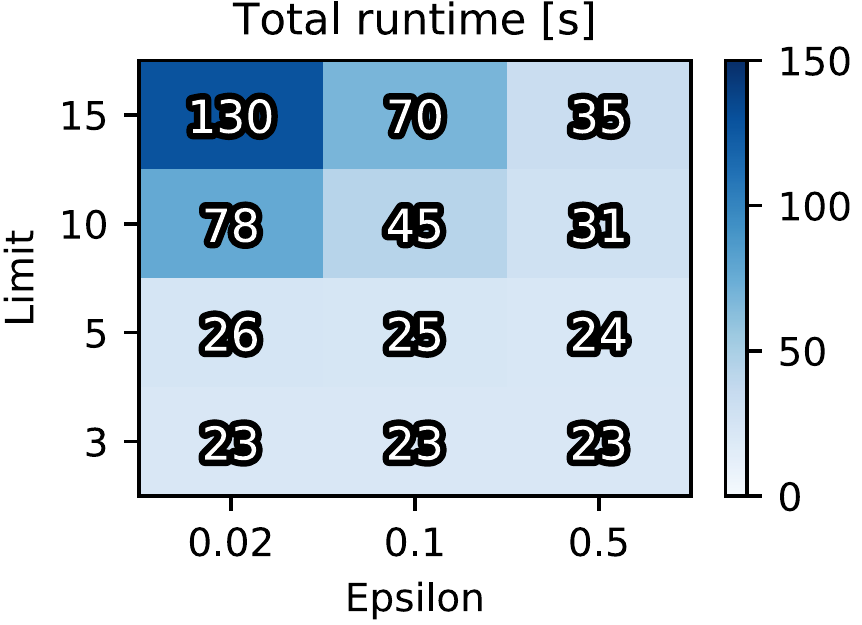} 
			\caption{\AFlex}
			\label{fig:eps.flex}
		\end{subfigure}%
		\hfill
		\begin{subfigure}{0.495\textwidth}
			\includegraphics[width=\linewidth]{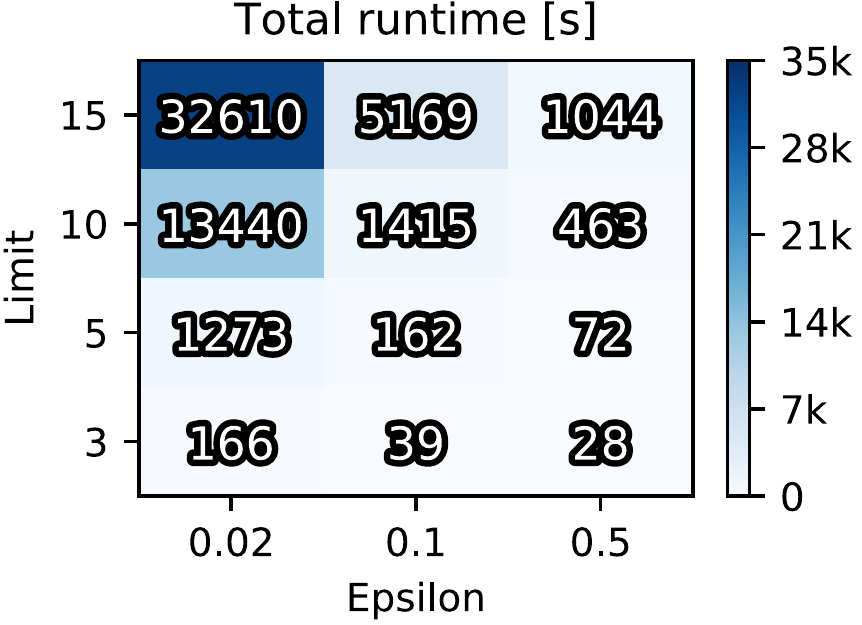} 
			\caption{\AStrict}
			\label{fig:eps.strict}
		\end{subfigure}
		\caption{Total runtime for different values of limit and $\epsilon$, split by algorithm type. The results are averaged for \erf, \nrf and over all topologies.}
		\label{fig:eps}
	\end{minipage}
\end{figure*}

For this paper, the execution parameters are extended by the following values.

\noindent\textbf{Latency approximation type}: Specifies which algorithm is used for calculating valid paths. May either be \AStrict, \AFlex, or the baseline (disregarding latencies).

\noindent\textbf{Latency approximation factor $\epsilon$:} We consider values in  $\{0.5, 0.1, 0.02\}$ as approximation factors for the RSP approximation algorithms by Lorenz and Raz and by Goel et al.

\noindent\textbf{Latency limit scaling factor:} We consider scaling values of $\{3, 5, 10, 15\}$ to set the latency bounds of the request graphs. Specifically, the scaling factor is multiplied by the substrate's average edge latency $\phi(G_S)$ to obtain the imposed latency limit. Accordingly, a scaling factor of $5$ limits the realization of request edges to at most 5 average-latency substrate edges.

We generate scenarios according to the 20 different parameter combinations (topology, edge and node resource factor). For each parameter combination, we generate 9 unique instances. Each scenario instance is then executed once using the baseline algorithm (neglecting latencies) and 12 times (each combination of latency approximation factor, limit scaling factor) for \AFlex and \AStrict.

Importantly, we always report on \emph{fractional} VNEP solutions obtained by using the different algorithms. We do so, as the rounding step necessary to obtain integral solutions is \emph{the same} for all algorithms and would hence only introduce additional randomness.

\subsection{Runtime Comparison} 

We now present the results of the runtime evaluation of the three algorithms.
Following the theoretical analysis the runtime is expected to increase with increasing substrate size and with decreasing approximation factor $\epsilon$. In the following we analyze the practical sensitivity of the three algorithms to changes to these parameters.
 \begin{figure*}[t]
	\begin{subfigure}{0.485\textwidth}                             
		\centering
	\includegraphics[width=0.9\linewidth]{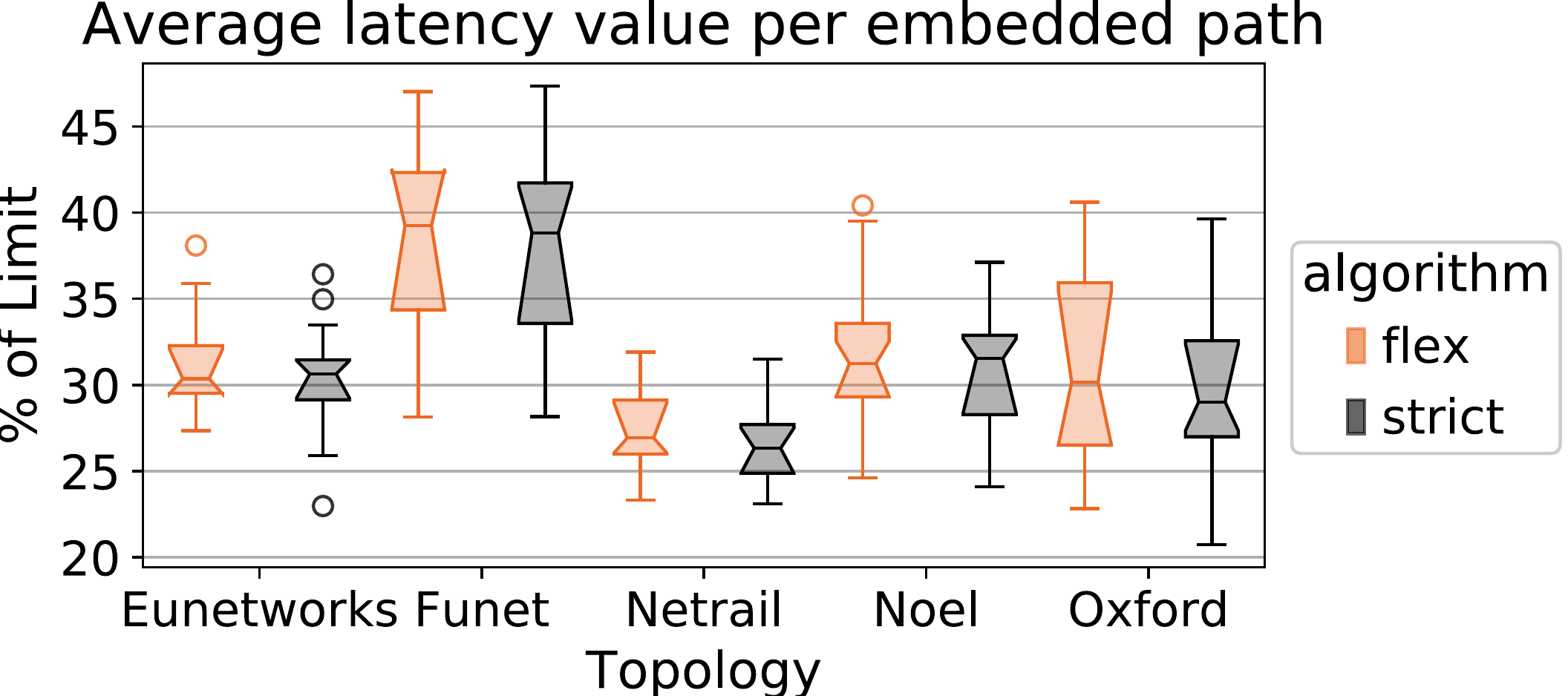} 
\end{subfigure}
	\hfill
	\begin{subfigure}{0.485\textwidth}
				\centering
	\includegraphics[width=0.9\linewidth]{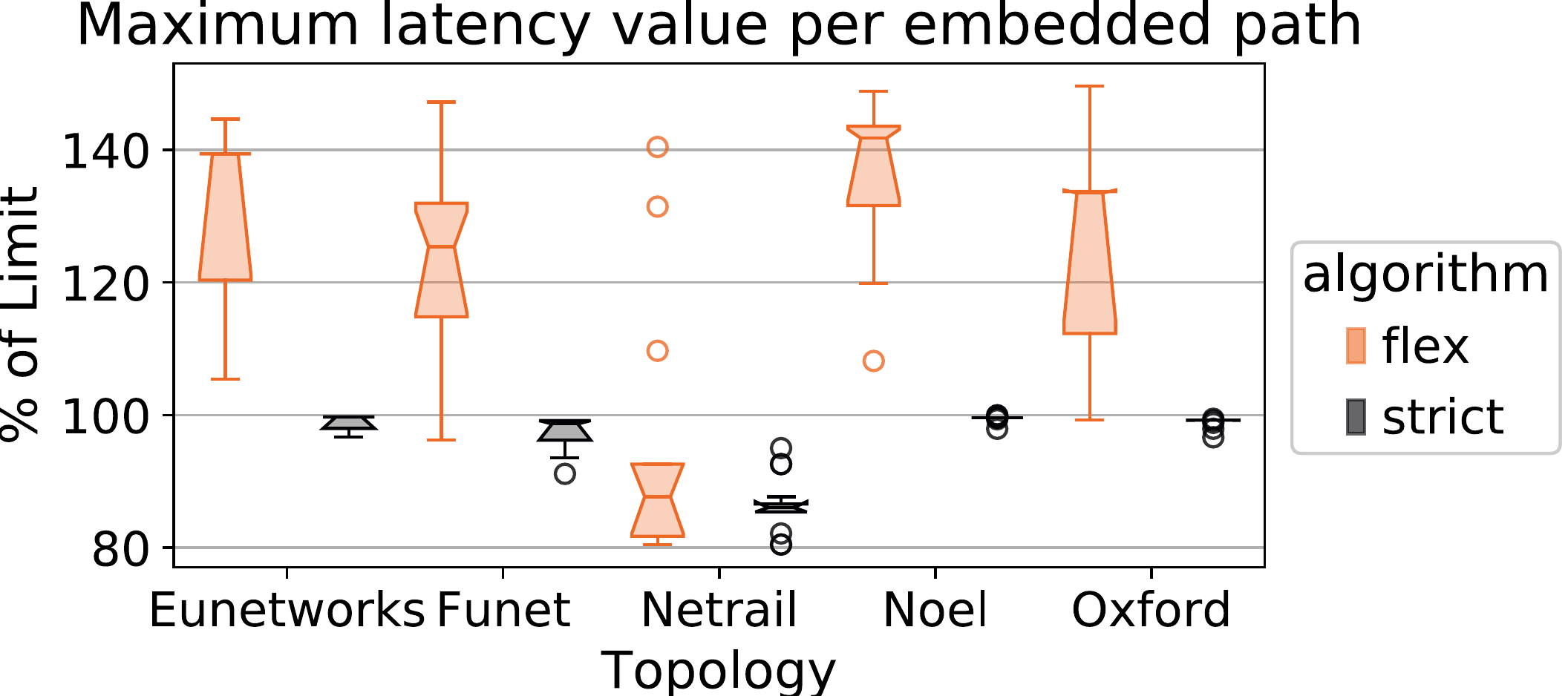} 
\end{subfigure}%

	\caption{Boxplots of the average (left) and maximum (right) latency value of each embedded request edge for $\epsilon=0.5$  and latency limit factor of 10.}
	\label{fig:boxplot:flex:latency:eps}
\end{figure*}

We first consider the runtime as a function of the algorithm and the enforced latency limit. Figure~\ref{fig:sparserequests} shows the results for lower expected node utilizations and higher edge utilizations are shown while Figure~\ref{fig:denserequests} shows the results for high expected node demands and lower edge utilizations. The baseline does not depend on the limit value. As a result the values in the middle column are always identical. 
Regarding the number of generated mappings, we observe that for \AFlex and \AStrict small limit values significantly restrict the amount of generated mappings while for large latency values the number of generated mappings approaches the baseline's value (cf. Figures~\ref{fig:sparserequests.mappings} and~\ref{fig:denserequests.mappings}). This in turn validates our experimental design: a limit of 3 is indeed very restrictive while a limit of 15 allows for all but few valid mappings found by the baseline. 
 Figures~\ref{fig:sparserequests.runtime}~and~\ref{fig:denserequests.runtime} show the runtime of the algorithms. While the runtime clearly is dependent on the number of returned mappings, the \AStrict algorithm is computationally much more expensive than \AFlex. This can be seen most dramatically for loose latency limits and relaxed edge demands in Figure~\ref{fig:denserequests.runtime}: the average runtime of \AFlex lies beneath 1.5 minutes while the runtime of \AStrict approaches roughly 1.9 hours.  
 Further investigating the runtime discrepancy,  Figure~\ref{fig:substratecomp} shows the runtime as a function of the substrate topology and the approximation guarantee. Firstly, we observe that the runtime of \AStrict lies significantly above both other algorithms and increases drastically with the substrate size (cf. Table~\ref{tab:zoo}). Also, the approximation factor $\epsilon$ has a much more dramatic impact on the runtime for \AStrict as the runtime increases roughly by a factor of 8.9 on average. These observations are in accordance with the theoretical runtime bounds from Section~\ref{section:application}, as the runtime of \AStrict is larger than the one of \AFlex by a factor of at least $\Omega(poly(n))$ (cf.~Theorems~\ref{theorem:strict} and~\ref{theorem:contribution:flex}).  Interestingly, both latency-algorithms take the longest on the topology Noel, even though it is neither the largest substrate in terms of nodes nor in terms of edges.  Lastly, Figure~\ref{fig:eps} depicts the runtime of the respective algorithms as a function of the approximation guarantee $\epsilon$ and the latency limit. Clearly, \AFlex provides a much better scalability both in terms of the latency limit and the approximation guarantee: for the highest latency limit and the best approximation factor \AStrict's runtime averages to about 9 hours while \AFlex only takes less than 2.2 minutes.  Importantly, In Figure~\ref{fig:eps.flex} we can observe another favorable quality of the \AFlex algorithm. Specifically, for small limit values (3 and 5) when very few mappings are generated, \AFlex shows no change in runtime for different values of $\epsilon$.

 \subsection{Latency Comparison}
In the following we will study how the algorithm choice influences the per edge latency. Figure~\ref{fig:boxplot:flex:latency:eps} illustrates the average as well as the maximum latency of the solutions on the different topologies. Recalling that the \AStrict algorithm may not exceed latency bounds while \AFlex may do so up to a factor of $1+\epsilon$ we can observe that the average latency of the \AFlex algorithm often times lies slightly above the one of \AStrict. Furthermore, and very importantly, for the medium latency limit of $10$ the average edge latency lies strictly below 50\% of the limit for all topologies and algorithms. However, there are cases in which both algorithms make full use of the maximal latency limit. In fact, the maximum edge latency of the \AStrict algorithm often comes very close to the imposed limit factor while the \AFlex algorithm rarely reaches its upper bound of $1+\epsilon$ times the original limit. Notably, these results empirically validate the correctness of our implementation.
Regarding the implications of these results, we observe that the average edge latency often times lies strongly below the imposed limit. This may hold true especially since compact embeddings, i.e., ones that use the least bandwidth resources, allow for embedding more requests, thereby increasing the profit. 
 Accordingly, it may be reasonable to at first chose a larger value for $\epsilon$ to significantly reduce the runtime while also keeping the average latency values low, with the option to refine the choice of $\epsilon$ whenever the latency-limit violation is too large.

\subsection{Profit Comparison}

We will now shortly analyze the \emph{quality} of the solutions produced by \AFlex, \AStrict, and the baseline algorithm. As performance measure, we employ the achieved profit of the computed fractional solutions. 

Figure~\ref{fig:boxplot} shows the averaged profit for the various topologies, the largest approximation factor $\epsilon$ and a medium latency limit factor of $10$. One can first observe that the profits of the \AFlex and the \AStrict algorithm are very similar while the baseline's profit regularly slightly exceeds the latency limited algorithms. While this is to be expected and our theoretical observation implies that the profit of \AFlex should be slightly above \AStrict in the most cases, we also observe some rare cases in which the profit of the latency limited algorithms exceeds the profit of the baseline. We believe this to be due to numerical instabilities when solving the underlying linear programs. 

\begin{figure}[t]
	\centering
	\includegraphics[width=0.9\linewidth]{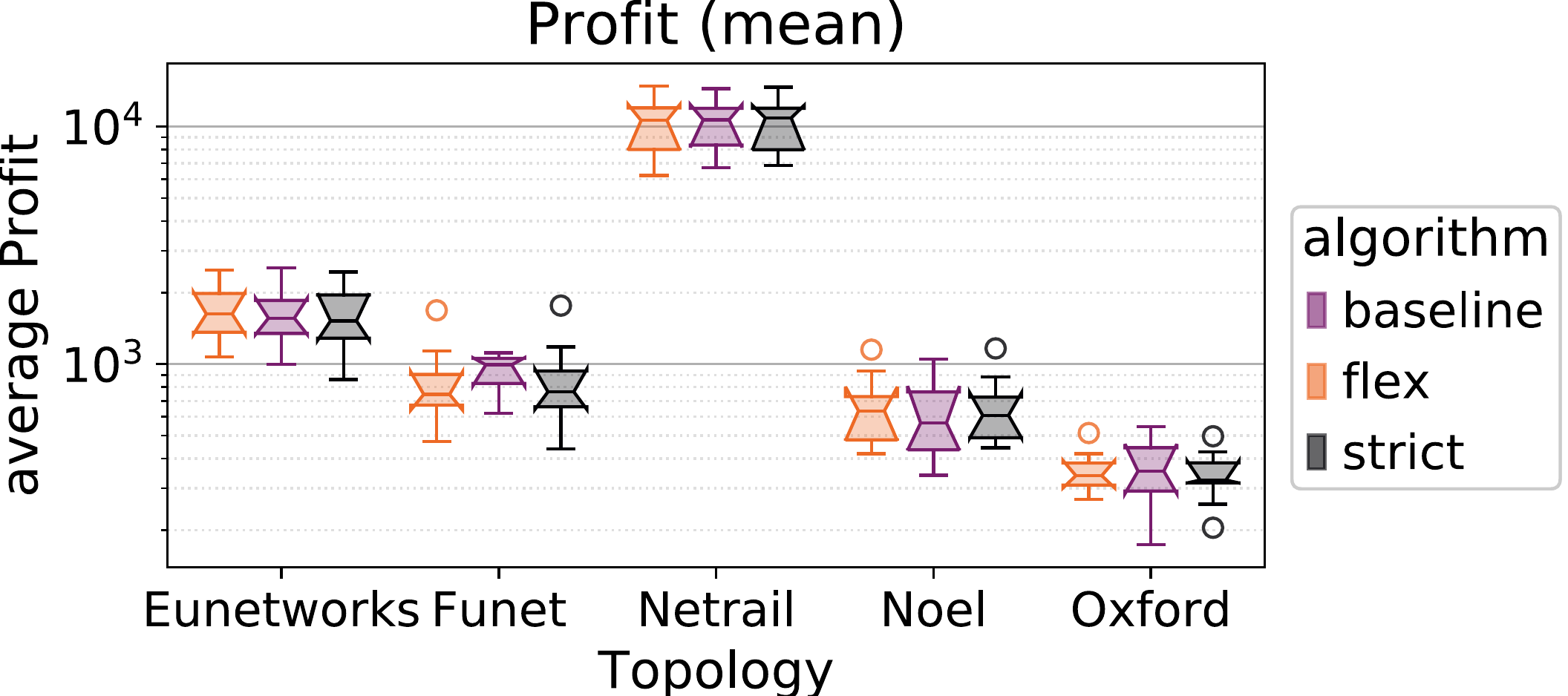} 
	\caption{Boxplot of the average achieved profits per topology. Parameters: $\epsilon: 0.5$, limit: 10.}
	\label{fig:boxplot}
\end{figure}

\begin{table}[b!]
	\centering
	\begin{tabular}{ c|c|c } 
		Substrate Network & Nodes & Edges \\ %
		\hline
		Geant (2012)  & 40 & 122 \\  %
		Iris  & 51 & 128 \\ %
		UsSignal  & 63 & 158 
	\end{tabular}
	\caption{Networks for scalability study of \AFlex }
	\label{tab:zoo-scalability}
\end{table}

\begin{figure*}[t]
\begin{minipage}[b]{0.485\textwidth}
	\centering
	\includegraphics[width=0.9\linewidth]{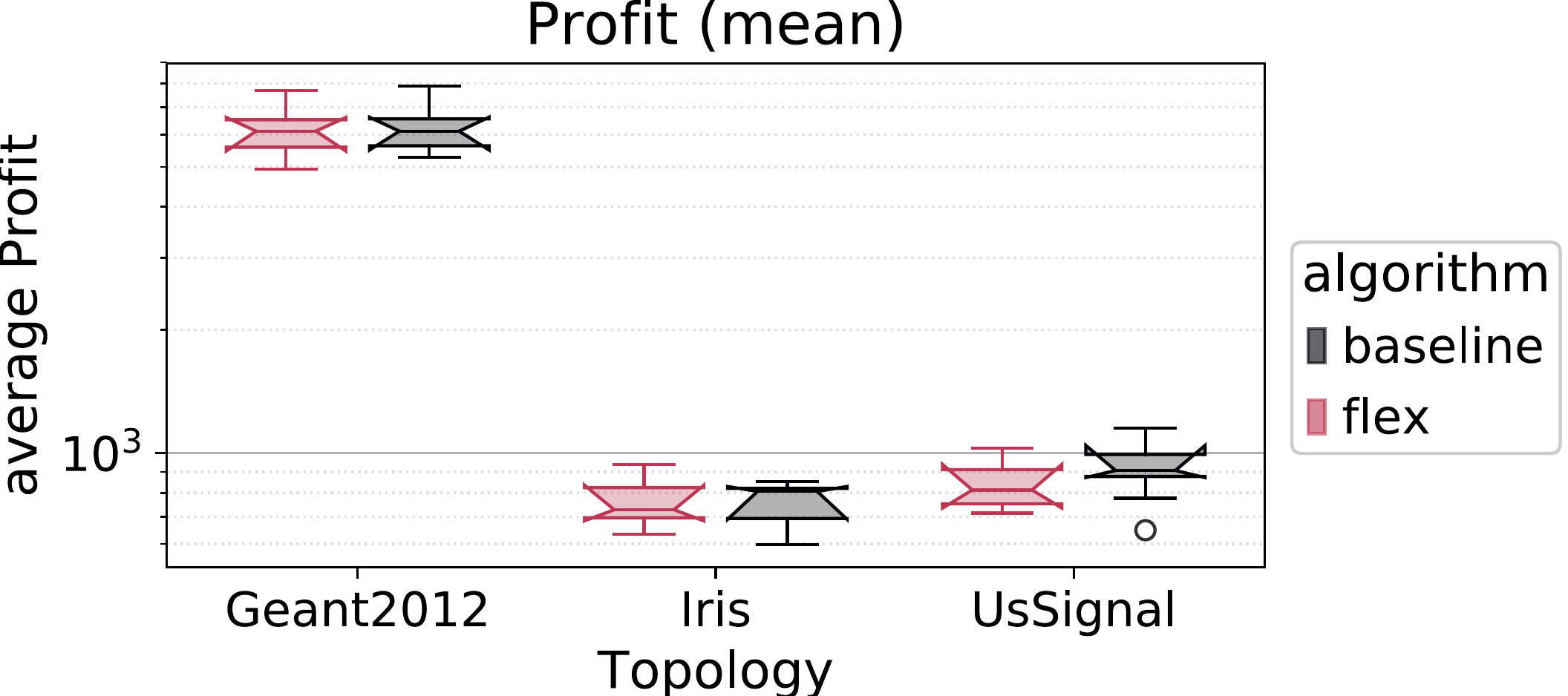} 
	\caption{Plot of the average achieved profit per topology of the scalability study for $\epsilon=0.5$ and the limit 10.}
	\label{fig:boxplot:flex:profit}
\end{minipage}	
\hfill
\begin{minipage}[b]{0.485\textwidth}
	\begin{subfigure}{0.5\columnwidth}
		\includegraphics[width=\linewidth]{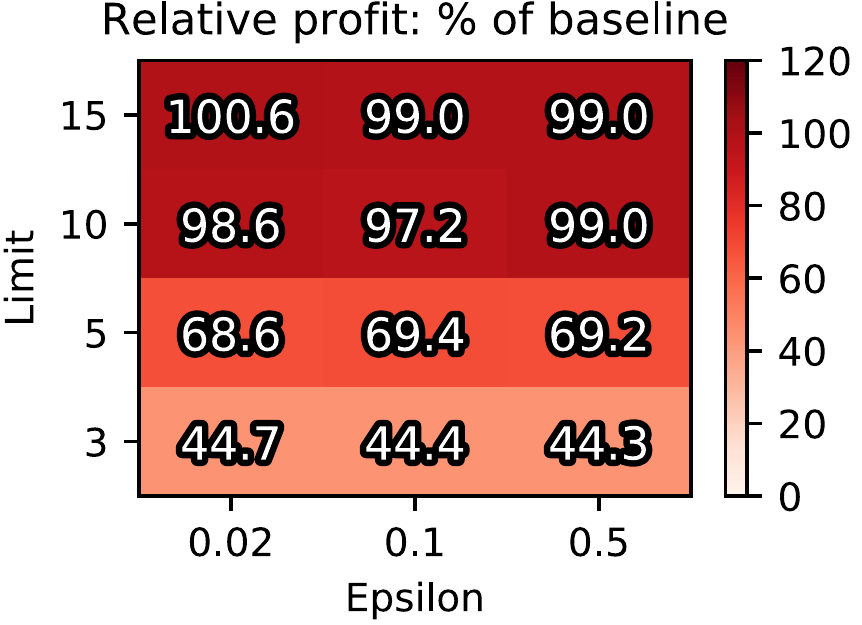}
	\end{subfigure}%
	\begin{subfigure}{0.5\columnwidth}                              \includegraphics[width=\linewidth]{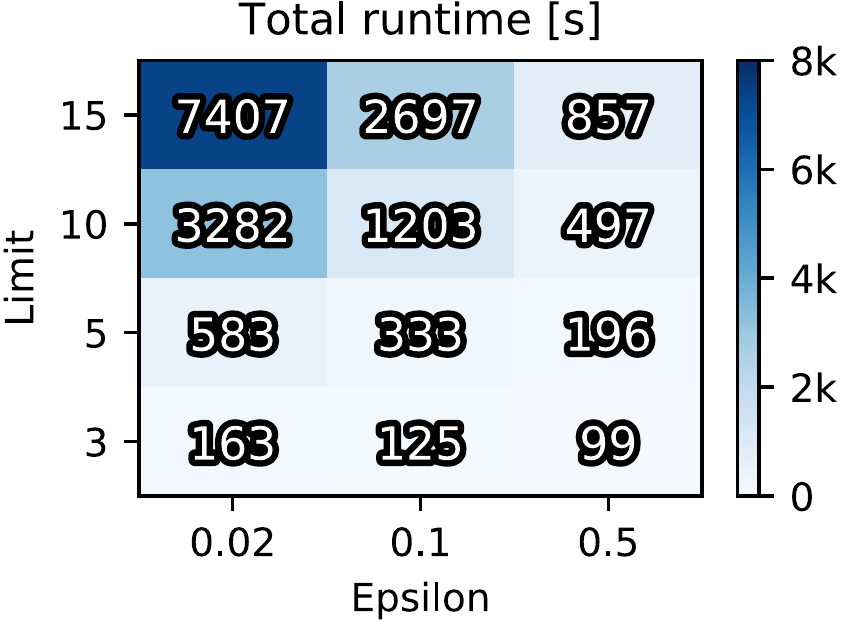}
	\end{subfigure}
	\caption{Relative maximum achieved profit and total runtime of \AFlex of the scalability study.}
	\label{fig:performance.eps}
\end{minipage}

\end{figure*}

\subsection{Scalability of \AFlex and Discussion}
\label{sec:evaluation:aflex-performance}
To emphasize how well \AFlex performs in direct comparison with the no-latency-baseline we also conducted an explorative study on larger substrate networks obtained from the Topology Zoo (see Table~\ref{tab:zoo-scalability}). Given the substrate sizes of at least 40 nodes and 122 edges, all of these substrates are far too large for running \AStrict in a reasonable amount of time. For these additional experiments, we adapt the scenario parameters as follows. We consider 50 requests and employ the same request generation procedure as before using \erf and \nrf values of 0.5. For each topology, we consider 10 instances created at random.

\AFlex produces solutions that yield profits close to the baseline for a latency limit of 10 in all substrates, see Figure~\ref{fig:boxplot:flex:profit}. In Figure~\ref{fig:performance.eps} we observe that for small $\epsilon$ and large limit values the average runtime of \AFlex slightly exceeds two hours, but stays below one hour in the other cases. Since the $\epsilon$ only scales the relaxation of the latency constraint, it has little influence on the achieved profits.

\section{Conclusion}\label{section:conclusion}

This paper presented a novel approximation algorithm
for the embedding of virtual networks which accounts
for latency constraints. Our algorithm is significantly
faster than state-of-the-art algorithms, as we have also
shown empirically. 
We believe that the combination of formal approximation
guarantees and low runtime makes our algorithm particularly interesting
in practice, as it allows to include latency constraints with little overhead. 
To ensure reproducibility and facilitate future
research in this area we have made the source code
of our algorithms and our experiments publicly available.

\section*{Acknowledgments.}
This project received funding from the 
European Research Council (ERC) under grant agreement 864228 (AdjustNet), Horizon 2020, 2020-2025.

{\balance
\bibliographystyle{abbrv}
\bibliography{bibliography}
}

\end{document}